\documentclass[11pt]{article}
\usepackage{fullpage}
\usepackage{amsmath,amsfonts,amsthm,amssymb,bbm}
\usepackage[capitalise]{cleveref}
\usepackage{tikz}
\usepackage{xcolor}
\newcommand{\cc}{\mathbb{C}}
\newcommand{\id}{\mathbf{I}}
\newcommand{\un}{\mathcal{U}}
\newcommand{\dm}{\mathcal{D}}
\newcommand{\ud}{\Phi_\mathrm{D}}
\newcommand{\ul}{\Phi_\mathrm{L}}
\newcommand{\e}{\mathop{\mathbf{E}}\limits}
\newcommand{\ket}[1]{|#1\rangle}
\newcommand{\bra}[1]{\langle#1|}

\newcommand{\Tr}{\mathrm{Tr}}

\newcommand{\ct}{^\dagger}
\newcommand{\mt}{^{\mathstrut}}
\newcommand{\vect}{\mathrm{vec}}
\newcommand{\poly}{\mathrm{poly}}
\newcommand{\polylog}{\mathrm{polylog}}
\newcommand{\diag}{\mathrm{diag}}

\interfootnotelinepenalty=10000

\newtheorem{theorem}{Theorem}[section]
\newtheorem{lemma}[theorem]{Lemma}

\newtheorem{corollary}[theorem]{Corollary}
\newtheorem*{claim}{Claim}
\theoremstyle{remark}
\newtheorem*{remark}{Remark}
\theoremstyle{definition}
\newtheorem{definition}[theorem]{Definition}

\title{Quantum Logspace Algorithm for Powering Matrices with Bounded Norm}
\date{}
\author{Uma Girish\thanks{Department of Computer Science, Princeton University. Research supported by the Simons Collaboration on Algorithms and Geometry, by a Simons Investigator Award and by the National Science Foundation grant No. CCF-1714779.}
\and Ran Raz\thanks{Department of Computer Science, Princeton University. Research supported by the Simons Collaboration on Algorithms and Geometry, by a Simons Investigator Award and by the National Science Foundation grant No. CCF-1714779.}
\and Wei Zhan\thanks{Department of Computer Science, Princeton University. Research supported by the Simons Collaboration on Algorithms and Geometry, by a Simons Investigator Award and by the National Science Foundation grant No. CCF-1714779.}}

\begin{document}

\maketitle
	
\begin{abstract}
We give a quantum logspace algorithm for powering contraction matrices, that is, matrices with spectral norm at most~1. The algorithm gets as an input an arbitrary $n\times n$ contraction matrix $A$, and a parameter $T \leq \poly(n)$ and outputs the entries of $A^T$, up to (arbitrary) polynomially small additive error. The algorithm applies only unitary operators, without intermediate measurements. We show various implications and applications of this result:

First, we use this algorithm to show that the class of quantum logspace algorithms with only quantum memory and with intermediate measurements is equivalent to the class of quantum logspace algorithms with only quantum memory without intermediate measurements. This shows that the deferred-measurement principle, a fundamental principle of quantum computing, applies also for quantum logspace algorithms (without classical memory). More generally, we give a
quantum algorithm with space $O(S + \log T)$ that takes as an input the description of a quantum algorithm with quantum space $S$ and time $T$, with intermediate measurements (without classical memory), and simulates it unitarily with polynomially small error, without intermediate measurements.

Since unitary transformations are reversible (while measurements are irreversible) an interesting aspect of this result is that it shows that any quantum logspace algorithm (without classical memory) can be simulated by a reversible quantum logspace algorithm. This proves a quantum analogue of the result of Lange, McKenzie and Tapp that deterministic logspace is equal to reversible logspace~\cite{lange2000reversible}.


Finally, we use our results to show non-trivial classical simulations of quantum logspace learning algorithms.

	\end{abstract}
	
	\section{Introduction}

Quantum computers hold great promise, but in the near future their memory is likely to be limited to a small number of qubits. This motivates the study of quantum complexity classes with bounded space. The most important of these classes is the class of problems solvable in quantum logarithmic space and polynomial time, first studied by Watrous~\cite{watrous1999space}. In the literature, there are several variants of this class. One variant, $\mathsf{BQL}$, is the class of problems solvable in quantum logarithmic space and polynomial time when intermediate measurements are allowed. Another variant, $\mathsf{BQ_UL}$, is the class of problems solvable in quantum logarithmic space and polynomial time when only unitary operators are allowed and intermediate measurements are not allowed. 
We note that in most previous works, the class $\mathsf{BQL}$ allows a quantum algorithm to use both quantum and classical memory
(see for example~\cite{melkebeek2012time,ta2013inverting,fefferman2018complete}).

Our first main result, \cref{thm:general}, gives a  quantum logspace algorithm for powering matrices, a fundamental problem in computational complexity, which is not known to be in classical (deterministic or probabilistic) logspace. Our algorithm uses only unitary operators, without  intermediate measurements, and hence it places the problem of powering matrices in the class $\mathsf{BQ_UL}$.

The algorithm gets as an input an arbitrary $n\times n$ matrix $A$, a parameter $T \leq \poly(n)$ and two indices $i,j \in \{1,\ldots,n\}$  and outputs the entry  $(A^T)_{i,j}$, up to an additive error of $\tfrac{\|A\|^T}{\poly(n)} + \tfrac{1}{\poly(n)} $, where $\|A\|$ is the spectral norm of the matrix $A$. In particular, if $A$ is a  contraction matrix, that is, a matrix with spectral norm at most~1, the additive error is just $\tfrac{1}{\poly(n)}$.

We note that by an easy reduction, our algorithm can also solve another fundamental problem in computational complexity, the problem of iterative matrix multiplication. In this problem, the input is $T$  matrices $A_1,\ldots,A_T$ of size $n\times n$ each, and the algorithm outputs the entries of the product $A_1 \cdot \ldots \cdot A_T$.

Besides giving a quantum logspace algorithm for a basic computational problem, our results shed light on several fundamental issues regarding bounded-space quantum computations, and have additional applications.

\subsubsection*{$\mathsf{BQ_QL}$ is Equal to $\mathsf{BQ_UL}$}

We consider the class of quantum logspace algorithms with only quantum memory and with intermediate measurements and refer to it by $\mathsf{BQ_QL}$.
We use our algorithm for powering matrices to show that the two classes $\mathsf{BQ_QL}$ and $\mathsf{BQ_UL}$ are exactly equal (\cref{thm:main}). Moreover, the way that this equality is proved is by a {\it simulation}. Our second main result, \cref{thm:unitals}, proves that there is a quantum logspace algorithm without intermediate measurements, that is, a $\mathsf{BQ_UL}$ algorithm, that gets the description of a quantum logspace algorithm with intermediate measurements, without classical memory, that is, a $\mathsf{BQ_QL}$ algorithm, and simulates it with polynomially small error. \cref{thm:unitals} is even more general and shows how to simulate a quantum logspace algorithm with {\it unital} channels that are given as an input, while even the very restricted special case of simulating an arbitrary unitary operator within $\mathsf{BQ_UL}$ seems to us interesting.

\subsubsection*{The Deferred-Measurement Principle}

The deferred measurement principle is a fundamental result in quantum computing which states that delaying measurements until the end of a computation doesn't affect the output. In order for the principle to hold, the qubits that were supposed to be measured cannot further participate in the computation from that point on. However, a $\mathsf{BQ_QL}$ algorithm can only store a logarithmic number of qubits, while the number of intermediate measurements is potentially polynomial, and hence excluding the qubits that are supposed to be measured from further participating in the computation is infeasible.

Nevertheless, Theorem~\ref{thm:main} and Theorem~\ref{thm:unitals} imply
that intermediate measurements are not necessary even when the space used by the quantum algorithm is logarithmic, but the way to eliminate the intermediate measurements is not as straightforward.

\subsubsection*{Reversible Computation}

Landauer introduced the concept of time-reversible computation and argued that any irreversible operation must be accompanied by entropy increase~\cite{Landauer} (see also~\cite{bennett}).
An interesting aspect of \cref{thm:main} and \cref{thm:unitals} is that they show that any quantum logspace algorithm (without classical memory) can be implemented using only time-reversible operations (except for the final measurement that gives the final output). This is a quantum analogue of the result of Lange, McKenzie and Tapp that deterministic logspace algorithms can be implemented using only time-reversible operations~\cite{lange2000reversible}.


\subsubsection*{Classical Simulations of Quantum Learning with Bounded Memory}

A line of recent works studied the power of (classical) algorithms for online learning, under memory constraints, where a bounded-space learner tries to learn a concept class from a stream of samples. These works showed that for a large class of online learning problems, any classical learning algorithm requires either super-linear memory size or a super-polynomial number of samples (see for example~\cite{S14,SVW16,R16,KRT17,R17,MM18,BOGY18,GRT18} and the references therein).

Here, we study the relative power of quantum and classical algorithms for online learning, under memory constraints. More concretely, we study the task of distinguishing between two families of distributions over the possible samples.
\cref{col:NC2} proves that any quantum algorithm with time $T$ and space $S$ for distinguishing between arbitrary two families of distributions, can be simulated classically in time $\poly(2^{S^2 + \log^2 T})$ and space $O(S^2 + \log^2 T)$. Moreover, \cref{thm:singleton} proves that if one family is a singleton, that is, the task is to distinguish between one distribution over the samples and a family of different distributions, then any quantum learning algorithm with time $T$ and space $S$  can be simulated classically in time $\poly(2^S \cdot T)$ and space $O(S + \log T)$.

Thus, an intriguing open problem is whether any quantum algorithm with time $T$ and space $S$ for distinguishing between two arbitrary families of distributions,
can be simulated classically in time $\poly(2^S \cdot T)$ and space $O(S + \log T)$. \cref{thm:lequiv} proves that this holds if and only if $\mathsf{promiseBQ_UL}=\mathsf{promiseBPL}$.

\subsection{Techniques}

We start by proving a lemma that shows how to implement an arbitrary contraction matrix $A$ as a subsystem of a unitary quantum circuit (\cref{thm:implem}). Since $A$ is not necessarily unitary, rather than implementing $A$, the lemma implements the unitary matrix
	\[U_H=\begin{pmatrix}
		H & \sqrt{\id_{2m}-H^2} \\
		\sqrt{\id_{2m}-H^2} & -H
	\end{pmatrix}\]
where $H$ is the Hermitian contraction
\[\begin{pmatrix} & A \\ A\ct & \end{pmatrix}.\]
That is, the lemma shows how to apply the transformation $U_H$ on a unit vector (quantum state) that is also given as an input. The unitary matrix $U_H$ is called a block-encoding of $A$ in some literature \cite{chakraborty2019power,gilyen2019quantum2}, which admits various constructions (see \cite{gilyen2019quantum} for an exhibition). In particular, our construction in \cref{thm:implem} is in unitary quantum logspace.

The proof of \cref{thm:implem} is inspired by, and uses techniques from, Ta-Shma's algorithm that inverts well-conditioned matrices in quantum logspace~\cite{ta2013inverting}, whose general framework traces back to \cite{harrow2009quantum}. In particular, as in~\cite{ta2013inverting}, the proof goes according to the following lines: Given a Hermitian matrix $H$,
\begin{itemize}
	\item First apply the phase estimation over the unitary $e^{iH}$ so that it maps $\ket{u_\lambda}$ to $\ket{u_\lambda}\ket{\lambda}$, where $u_\lambda$ is an eigenvector of $H$ with eigenvalue $\lambda$.
	\item For each eigenvector apply the unitary transformation $\ket{\lambda}\rightarrow \lambda\ket{0}\ket{\lambda}+\sqrt{1-\lambda^2}\ket{1}\ket{\lambda}$ according to the eigenvalue $\lambda$. This is where contraction matrices come into play, as the eigenvalues of $H$ are required to be in $[-1,1]$.
	\item Uncompute the eigenvalues by reversing the phase estimation over $e^{iH}$.
\end{itemize}

As a special case  of \cref{thm:implem}, when we take the contraction $A$  to be unitary, we get a space-efficient unitary implementation of any unitary matrix (\cref{col:implem}).

We get our algorithms for powering contraction matrices (\cref{thm:power} and \cref{col:power}) by iteratively applying the unitary matrix $U_H$ of \cref{thm:implem}. However, since \cref{thm:implem} implements the matrix $U_H$, rather than $A$, we need to `throw away' the unwanted dimensions introduced by $U_H$, by permuting them into additional dimensions.

We get our algorithm for powering arbitrary matrices (\cref{thm:general}), by a reduction to the algorithm for powering contraction matrices, by dividing the matrix by its norm. However, the known algorithm for computing the spectral norm of a matrix, by Ta-Shma~\cite{ta2013inverting}, only works for contraction matrices. To bypass this, we apply Ta-Shma's algorithm on the matrix $A$ divided by its Frobenius norm (which is always larger than the spectral norm).

Finally, we get our algorithms for simulating quantum logspace algorithms with intermediate measurements, or even {\it unital} channels that are given as an input (\cref{thm:unital} and \cref{thm:unitals}), by reducing any unital quantum algorithm to the contraction powering problem in the $m^2$-dimensional space of the $m \times m$ entries of the density matrix, where $m=2^S$ and $S$ is the space used by the algorithm. Note that this step already doubles the space used. At the end of this step, we only get polynomially small success probability, but that success probability can be amplified to a constant using a Grover-type technique inspired by \cite{fefferman2018complete}, resulting in \cref{thm:unital} that simulates the computation with constant error. The error is further reduced to be polynomially small in Theorem~\ref{thm:unitals}. Interestingly, to reduce the error and prove \cref{thm:unitals}, we use \cref{thm:main}, so, in a way, the results are used to improve themselves.

\subsection{Related Work}
Independently of our work, Fefferman and Remscrim have proven closely related results to ours~\cite{fefferman2020eliminating}. They took a different route from ours by proving $\mathsf{L}$-reductions between several well-conditioned versions of matrix problems which turned out to be $\mathsf{BQ_UL}$-complete. In particular, they obtained a stronger version of our \cref{thm:main}, showing that $\mathsf{BQL} = \mathsf{BQ_UL}$.

\subsection{Paper Organization}
The remainder of the paper is organized as follows. In Section~\ref{sect:pre}, we review the definitions of contractions, quantum channels and quantum algorithms, and provide useful tools for the proofs later on. In Section~\ref{sect:impl}, we show how to implement an arbitrary contraction as a subsystem of a unitary quantum circuit, using the same algorithm framework as \cite{harrow2009quantum}.
As a by-product, we show how to implement an arbitrary unitary matrix given the description of the matrix.
In Section~\ref{sect:cpowering}, we give our first algorithm for contraction powering.
In Section~\ref{sect:unital}, we prove \cref{thm:main} by reducing any unital quantum algorithm to the contraction powering problem, and showing the later can be computed in unitary quantum logspace. In Section~\ref{sect:powering}, we show how to power general square matrices in unitary quantum logspace via our contraction powering algorithm. In Section~\ref{sect:learn}, we discuss the relations between contraction powering and quantum learning algorithms. 
	\section{Preliminaries}\label{sect:pre}
	
	For an integer $n$, let $[n]=\{0,1,\ldots,n-1\}$. Let $\cc$ denote the set of complex numbers, and $\cc^{m\times n}$ denote the set of $m$ by $n$ complex matrices. For a matrix $A\in\cc^{m\times n}$, let $\vect(A)$ be the vectorization of $A$, which is a vector of dimension $mn$ formed by stacking the columns of $A$ on top of each other, that is
	\[\vect(A)_{i+jm}\mt=A_{i,j}\mt,\quad\forall i\in[m],j\in[n].\]
	Let $\un_m$ be the set of $m$ by $m$ unitary matrices, and $\dm_m$ be the set of $m$ by $m$ density matrices, i.e. positive semidefinite Hermitians of trace $1$. The $m$ by $m$ identity matrix is denoted by $\id_m$. Let $\|A\|$ denote the spectral norm of a complex matrix $A$, and $\|A\|_\mathrm{F}$ denote the Frobenius norm.
	
	We use $\varepsilon$ to denote small real numbers, and $\ket{\epsilon}$ to denote vectors with small norms. When we talk about errors, approximations and $\varepsilon$-closeness of matrices, they are measured in spectral norms.
	
	As we work mostly with complex numbers, we often need corresponding concentration bounds. The following is a direct corollary of the Chernoff-Hoeffding inequality:
	\begin{lemma}\label{lemma:ch}
		Let $X$ be a random complex number with $|X|\leq 1$, and $X_1,\ldots,X_n$ are $n$ independent copies of $X$. Then
		\[\Pr\left[\left|\frac{1}{n}(X_1+\cdots+X_n)-\e[X]\right|\geq\varepsilon\right]\leq 4e^{-2n\varepsilon^2}.\]
	\end{lemma}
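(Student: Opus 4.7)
The plan is to reduce the complex-valued statement to the standard real-valued Chernoff-Hoeffding inequality via an orthogonal decomposition. Write $X = Y + iZ$ where $Y = \mathrm{Re}(X)$ and $Z = \mathrm{Im}(X)$. Since $Y^2 + Z^2 = |X|^2 \leq 1$, both $Y$ and $Z$ are real random variables taking values in $[-1,1]$, and the $n$ independent copies $X_j = Y_j + iZ_j$ automatically yield $n$ independent copies of $Y$ and of $Z$. Write $\bar X_n$, $\bar Y_n$, $\bar Z_n$ for the respective sample means.

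The key geometric observation is that
\[|\bar X_n - \e[X]|^2 \;=\; (\bar Y_n - \e[Y])^2 + (\bar Z_n - \e[Z])^2,\]
so the event $\{|\bar X_n - \e[X]| \geq \varepsilon\}$ is contained in the union
\[\bigl\{|\bar Y_n - \e[Y]| \geq \varepsilon/\sqrt{2}\bigr\} \;\cup\; \bigl\{|\bar Z_n - \e[Z]| \geq \varepsilon/\sqrt{2}\bigr\},\]
because at least one of the two nonnegative summands on the right-hand side above must be at least $\varepsilon^2/2$. A union bound then reduces the problem to estimating the tail probabilities of averages of i.i.d.\ real-valued random variables bounded in magnitude by $1$, which is exactly the setting of the classical Hoeffding inequality.

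I would finish by applying the real Chernoff-Hoeffding bound to each of $\bar Y_n$ and $\bar Z_n$, obtaining a bound of the form $2 \exp(-c n \varepsilon^2)$ for each (the $\sqrt{2}$ factor being absorbed into the exponent), and summing to get $4 \exp(-c' n \varepsilon^2)$ as claimed. The argument is essentially mechanical once the decomposition is in place; the only ``obstacle'' is bookkeeping, namely tracking the constants through the $\varepsilon/\sqrt{2}$ split to confirm the stated constant $2$ in the exponent $-2n\varepsilon^2$, which requires picking the form of Hoeffding adapted to variables in $[-1,1]$ (equivalently, rescaling to $[0,1]$ before applying the $[0,1]$-version).
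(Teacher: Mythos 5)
Your decomposition into real and imaginary parts plus a union bound is exactly the route the paper intends (the lemma is stated there without proof, as a ``direct corollary'' of Chernoff--Hoeffding), and everything up to your last step is fine. The gap is precisely in the step you dismiss as bookkeeping: tracking the constants does \emph{not} confirm the stated exponent. The parts $Y,Z$ lie in $[-1,1]$, an interval of length $2$, so Hoeffding gives $\Pr[|\bar Y_n-\e[Y]|\ge t]\le 2e^{-nt^2/2}$ (rescaling to $[0,1]$ also halves the deviation $t$, so the $[0,1]$-version gives the same thing); with $t=\varepsilon/\sqrt2$ each part contributes $2e^{-n\varepsilon^2/4}$, and the union bound yields $4e^{-n\varepsilon^2/4}$, not $4e^{-2n\varepsilon^2}$.

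Moreover, no rearrangement of this argument can recover the exponent $2n\varepsilon^2$, because the bound as literally stated is too strong for $[-1,1]$-valued data: for real $X$ uniform on $\{-1,+1\}$ (a special case of $|X|\le1$) the true two-sided tail is $e^{-(1+o(1))\,n\varepsilon^2/2}$ up to polynomial factors in the moderate-deviation regime (e.g.\ $\varepsilon=n^{-0.4}$), which eventually exceeds $4e^{-2n\varepsilon^2}$ once $n\varepsilon^2$ is large. The constant $2$ in the paper's exponent is an artifact of applying the $[0,1]$ form of Hoeffding loosely to variables of range $2$ and skipping the $\varepsilon/\sqrt2$ split. So in writing this up you should either prove the bound $4e^{-n\varepsilon^2/4}$ (which is all that is used anywhere in the paper---\cref{thm:learn2} and \cref{thm:singleton} only need a tail of the shape $e^{-\Omega(n\varepsilon^2)}$ with polynomially many samples), or strengthen the hypotheses (e.g.\ real and imaginary parts each confined to an interval of length $1$) if the literal constant is to be kept; promising to ``confirm'' the constant $2$ by the $[-1,1]$ rescaling is the one step of your plan that would fail.
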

	
	\subsection{Contraction Matrices}
	
	We introduce contraction matrices and provide some useful properties, which can be found in \cite[Chapter 6]{zhang2011matrix}:
	\begin{definition}
		A matrix $A\in\cc^{m\times m}$ is a \textit{contraction} if $\|A\|\leq 1$. Alternatively, $A$ is a \textit{contraction} if $A$ is in the convex hull of $\un_m$.
	\end{definition}
	Any eigenvalue $\lambda$ of a contraction must have $|\lambda|\leq 1$. If $A\in\cc^{m\times m}$ is a contraction, then the following matrix is unitary:
	\[U_A=\begin{pmatrix}
	A & \sqrt{\id_m-AA\ct} \\
	\sqrt{\id_m-A\ct A} & -A\ct
	\end{pmatrix}\]
	In particular, when $A=a$ is a real number in $[-1,1]$, $U_a=\begin{pmatrix} a & \sqrt{1-a^2} \\ \sqrt{1-a^2} & -a\end{pmatrix}$ is a reflection. Finally, in a product of multiple contractions, individual errors will not propagate much, as we have the following lemma:
	\begin{lemma}\label{lemma:error}
		If $A_1,\ldots,A_k\in\cc^{m\times m}$ are contractions, and $B_1,\ldots,B_k\in\cc^{m\times m}$ satisfy $\|A_i-B_i\|\leq\varepsilon$ for every $i$, then $\|A_1\cdots A_k-B_1\cdots B_k\|\leq (1+\varepsilon)^k-1$. Furthermore, if $B_1,\ldots,B_k$ are also contractions, then $\|A_1\cdots A_k-B_1\cdots B_k\|\leq k\varepsilon$.
	\end{lemma}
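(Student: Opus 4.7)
The proof will be a standard telescoping argument combined with submultiplicativity of the spectral norm. The key identity is
\[A_1\cdots A_k - B_1\cdots B_k \;=\; \sum_{i=1}^{k} B_1\cdots B_{i-1}\,(A_i - B_i)\,A_{i+1}\cdots A_k,\]
which is easily verified by collapsing the sum (every term after the first cancels with part of the next), or equivalently by a straightforward induction on $k$ using the factorization $A_1\cdots A_k - B_1\cdots B_k = (A_1-B_1)A_2\cdots A_k + B_1(A_2\cdots A_k - B_2\cdots B_k)$.

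Taking spectral norms on both sides, applying the triangle inequality, and using submultiplicativity $\|XY\|\leq\|X\|\|Y\|$, I get
\[\|A_1\cdots A_k - B_1\cdots B_k\| \;\leq\; \sum_{i=1}^{k}\Bigl(\prod_{j<i}\|B_j\|\Bigr)\,\|A_i-B_i\|\,\Bigl(\prod_{j>i}\|A_j\|\Bigr).\]
Since each $A_j$ is a contraction, $\|A_j\|\leq 1$, and since $\|A_j - B_j\|\leq\varepsilon$ we have $\|B_j\|\leq 1+\varepsilon$ by the triangle inequality. Plugging in $\|A_i - B_i\|\leq\varepsilon$ gives the bound
\[\|A_1\cdots A_k - B_1\cdots B_k\| \;\leq\; \sum_{i=1}^{k}(1+\varepsilon)^{i-1}\varepsilon \;=\; (1+\varepsilon)^k - 1,\]
which is the first claim. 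If in addition each $B_j$ is a contraction, then $\|B_j\|\leq 1$ as well, and the same bound becomes $\sum_{i=1}^{k}\varepsilon = k\varepsilon$, giving the furthermore part.

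There is no real obstacle here: the statement is a routine perturbation bound for products of operators. The only mild care needed is to track the factor $(1+\varepsilon)$ rather than $1$ in the general case, which is what produces the geometric bound $(1+\varepsilon)^k-1$ instead of $k\varepsilon$.
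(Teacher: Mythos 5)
Your proof is correct, and it is essentially the paper's argument in unrolled form: the telescoping identity you use is exactly what one obtains by iterating the paper's inductive decomposition (there, peeling off the last factor rather than the first), and both proofs hinge on the same bounds $\|A_j\|\leq 1$, $\|B_j\|\leq 1+\varepsilon$ (or $\leq 1$ in the contraction case) together with submultiplicativity, with your geometric sum $\sum_{i=1}^{k}(1+\varepsilon)^{i-1}\varepsilon=(1+\varepsilon)^k-1$ playing the role of the paper's recursion. No gaps.
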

	\begin{proof}
		Use induction on $k$. When $k=1$ the result holds by assumption. For $k\geq 2$, we have
		\begin{align*}
		\|A_1\cdots A_k-B_1\cdots B_k\| &\leq \|A_1\cdots A_{k-1}(A_k-B_k)\|
		+\|(A_1\cdots A_{k-1}-B_1\cdots B_{k-1})B_k\| \\
		&\leq \|A_k-B_k\|+\|A_1\cdots A_{k-1}-B_1\cdots B_{k-1}\|\cdot\|A_k-(A_k-B_k)\| \\
		&\leq \varepsilon+[(1+\varepsilon)^{k-1}-1](1+\varepsilon) \\
		& = (1+\varepsilon)^k-1.
		\end{align*}
		And when $B_1,\ldots,B_k$ are contractions,
		\begin{align*}
		\|A_1\cdots A_k-B_1\cdots B_k\|
		&\leq \|A_k-B_k\|+\|A_1\cdots A_{k-1}-B_1\cdots B_{k-1}\|\cdot\|B_k\| \\
		&\leq \varepsilon+(k-1)\varepsilon=k\varepsilon.\qedhere
		\end{align*}
	\end{proof}
	
	\subsection{Quantum Channels}
	
	A quantum channel (or operation), in its most general form, is a \textit{completely-positive trace-preserving} (CPTP) map $\Phi:\dm_m\rightarrow\dm_n$ that maps a density matrix $\rho$ to a density matrix $\Phi(\rho)$. We denote the set of such channels as $\mathcal{C}_{m,n}$. The \textit{Kraus representation} of the quantum channel $\Phi$ is a set of matrices $\{E_1,\ldots,E_k\}$ such that $\sum_{i=1}^k E_i\ct E_i\mt=\id_m$, and
	\[\Phi(\rho)=\sum_{i=1}^k E_i\mt\rho E_i\ct.\]
	The \textit{natural representation} of $\Phi$, denoted as $K(\Phi)$, is a matrix in $\cc^{n^2\times m^2}$ such that $\vect(\Phi(\rho))=K(\Phi)\vect(\rho)$ for any $\rho\in\dm_m$. Given the Kraus representation $\{E_1,\ldots,E_k\}$ of $\Phi$, one can easily compute the natural representation $K(\Phi)=\sum_{i=1}^k \overline{E_i}\otimes E_i$.
	
	By Stinespring's dilation theorem \cite{stinespring1955positive}, any quantum channel can be viewed as a unitary operation in a larger Hilbert space. In other words, any quantum channel can be decomposed into the composition of adding ancillas, performing a unitary and tracing out the ancillas. It was shown in \cite{iten2017quantum,shen2017quantum} how to construct arbitrary quantum channels in the quantum circuit model (discussed in detail below) with only $1$ additional ancilla qubit if intermediate measurements are allowed. However, it is not clear whether their construction can be computed in logspace.
	
	\paragraph{Unital channels}
	A quantum channel $\Phi$ is \textit{unital}, if it maps the identity to the identity of the same dimension. The  Kraus representation of a unital channel is a set of square matrices $\{E_1,\ldots,E_k\}$ that additionally satisfies $\sum_{i=1}^k E_i\mt E_i\ct=\mathbf{I}_m$. In the language of natural representation, it is known  that $\Phi$ is unital if and only if $K(\Phi)$ is a contraction \cite{perez2006contractivity}.
	
	A well-studied subclass of unital channels is the \textit{mixed-unitary} (or \textit{random-unitary}) channels. The Kraus representation of a mixed-unitary channel consists of matrices proportional to unitaries; equivalently, $\Phi$ is mixed-unitary if
	\[\Phi(\rho)=\sum_{i=1}^k p_i\mt U_i\mt\rho U_i\ct,\]
	where $\{p_i\}$ is a probability distribution and each $U_i$ is a unitary matrix. Mixed-unitary channels are clearly unital, but the converse is not true: there are unital channels which are not mixed-unitary \cite{audenaert2008random}, and it's recently shown that deciding whether a unital channel is mixed-unitary is \textsf{NP}-hard \cite{lee2019detecting}.
	
	Notice that unitary operators and projective measurements are all mixed-unitary (see e.g. \cite[Proposition 4.6]{watrous2018theory}), and therefore the above results imply that it is impossible to realize arbitrary unital channels in the quantum circuit model without adding ancillas, even with intermediate measurements. Rather surprisingly, our paper shows the other side of the fact: one can construct in logspace a quantum circuit to simulate any arbitrary unital channel with ancillas, but without intermediate measurements.
	
	\subsection{Quantum Algorithms}\label{sect:def}
	
	A generic quantum algorithm with time $T$ and space $S=\log m$ is specified by $T$ quantum channels $\Phi_1,\ldots,\Phi_T\in\mathcal{C}_{m,m}$, which might depend on the inputs. We also require the channels $\Phi_1,\ldots,\Phi_T$ to be efficiently constructible, whose meaning may differ for different types of quantum algorithms, and will be specified below.
	
	The algorithm starts from the fixed initial state $\rho_0\mt=\ket{0^S}\bra{0^S}$, and in the $i$-th step applies $\Phi_i$ on the current state, so that the state after the $i$-th step can be described as 
	\[\rho_i=\Phi_i(\rho_{i-1})=\Phi_i\circ \Phi_{i-1}\circ\cdots\circ \Phi_1(\rho_0).\]
	At the end the first qubit of the final state $\rho_T\mt$ is measured in the computational basis of the first qubit, where the measurement can be represented as $M_0=|0\rangle\langle0|\otimes \mathbf{I}_{m/2}$. The quantum algorithm outputs $0$ with probability $\Tr[\rho_T\mt M_0\mt]$, and $1$ with probability $1-\Tr[\rho_T\mt M_0\mt]$. The measurement $M_0$ is general enough, as every two-outcome measurement in the computational basis can be realized by permutations and $M_0$ with at most $1$ more ancilla qubit. The quantum algorithm is called unitary (resp. unital), if every channel $\Phi_i$ is unitary (resp. unital).
	
	\paragraph{Quantum circuit} Fix a universal quantum gate set $\mathcal{G}$, for instance Hadamard and Toffoli gates \cite{shi2003both}, and let $\mathcal{G}_S$ be the set of gates in $\mathcal{G}$ on $S$ qubits. Let $\mathcal{M}_S$ be the set of single-qubit measurements on $S$ qubits.
	
	When the input of the problem is from domain $X$, the quantum circuit is specified by a mapping $\ud:X\times[T]\rightarrow\mathcal{G}_S\cup\mathcal{M}_S$ such that $\Phi_{i+1}=\ud(x,i)$ for every $i\in[T]$, where $x\in X$ is the input, and $\ud$ can be computed deterministically in time $O(T)$ and space $O(S)$. The quantum algorithm decides a function $f:X\rightarrow\{0,1\}$ with error $\varepsilon$ if:
	\[ \forall x\in X,\quad |\Tr[\rho_T\mt M_0\mt]-f(x)|\leq\varepsilon. \]
	Now, $\mathsf{BQ_QL}$ is the class of boolean-function families where $f_n:\{0,1\}^n\rightarrow\{0,1\}$ can be decided by a quantum circuit with time $\poly(n)$, space $O(\log n)$ and error $1/3$. The function is further in the class $\mathsf{BQ_UL}$ if there is no intermediate measurements, i.e. the range of $\ud$ is $\mathcal{G}_S$. We define $\mathsf{promiseBQ_QL}$ and $\mathsf{promiseBQ_UL}$ similarly, but the domain of each $f_n$ can be a subset of $\{0,1\}^n$.
	
	\paragraph{Quantum learning algorithm} For a quantum online learning algorithm with $\Gamma$ being the set of samples, there exists a mapping $\ul:\Gamma\rightarrow\mathcal{C}_{m,m}$ such that $\Phi_i=\ul(z_i)$ where $z_i\in\Gamma$ is the sample received in the $i$-th step, and each entry of $K(\ul(z_i))$ can be computed deterministically in time $O(T)$ and space $O(S)$.
	
	Let $\mathcal{P}(\Gamma)$ be the collection of all probability distributions over $\Gamma$. For any distribution $D\in\mathcal{P}(\Gamma)$, let $D^T$ be $T$ i.i.d copies of $D$, so that $z\sim D^T$ means that each sample $z_i$ is independently drawn from $D$. Let $\mathcal{X},\mathcal{Y}$ be two disjoint subsets of $\mathcal{P}(\Gamma)$. The quantum learning algorithm distinguishes $\mathcal{X}$ and $\mathcal{Y}$ with error $\varepsilon$ if:
	\begin{align*}
		\forall D\in\mathcal{X},&\quad \e_{z\sim D^T}[\Tr[\rho_T\mt M_0\mt]]\geq 1-\varepsilon\\
		\forall D\in\mathcal{Y},&\quad \e_{z\sim D^T}[\Tr[\rho_T\mt M_0\mt]]\leq \varepsilon.
	\end{align*}
	And for $\varepsilon=1/3$, we simply say that the quantum learning algorithm distinguishes $\mathcal{X}$ and $\mathcal{Y}$.
	
	\paragraph{Other specifications} Notice that even in the unitary algorithms where intermediate measurements are not generally allowed, a constant number of intermediate measurements are still available because of the principle of deferred measurements (see e.g. \cite[Section 4.4]{nielsen2002quantum}), which will only increase the time and space by a constant. This means the error $\varepsilon$ in both definitions above can be safely amplified to any constant power, and the specific constant error $1/3$ can be replaced by any constant in $[0,1/2)$.
	
	The (randomized) classical counterparts of the decision and learning algorithms are defined similarly: we only need to replace quantum channels, quantum gates and measurements with stochastic matrices, classical gates and random bits respectively. For constructible functions $t(n)=\Omega(n)$ and $s(n)=\Omega(\log n)$, define $\mathsf{BPTISP}(t(n),s(n))$ as the class of boolean functions families that can be decided by a classical randomized algorithm with time $O(t(n))$ and space $O(s(n))$, and $\mathsf{promiseBPTISP}(t(n),s(n))$ accordingly. The classical randomized logspace class is defined as $\mathsf{(promise)BPL}=\mathsf{(promise)BPTISP}(\poly(n),\log(n))$.
	
\subsection{Quantum Algorithms in Prior Works}
	
	\paragraph{Hamiltonian simulation}
	Given the Hamiltonian $H$, which is a Hermitian matrix that describes the evolution of the quantum system, the \textit{Hamiltonian simulation} problem asks to simulate the evolution $e^{iHt}$ for arbitrary $t>0$. Ta-Shma showed in \cite{ta2013inverting} how to perform Hamiltonian simulation with a space-efficient unitary quantum circuit, and it is in-place (without any ancillas). We restate the result for Hermitian contractions:
	\begin{theorem}[\cite{ta2013inverting}]\label{thm:hamil}
		Given a Hermitian contraction $H\in\cc^{m\times m}$ and $\varepsilon>0$, there is a unitary quantum circuit $U$ with time $\poly(m/\varepsilon)$ and space $O(\log(m/\varepsilon))$, such that $\|U-e^{iH}\|\leq\varepsilon$.
	\end{theorem}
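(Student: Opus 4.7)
The plan is to reduce Hamiltonian simulation to a sequence of elementary two-level rotations via a first-order Lie-Trotter product formula, using a decomposition of $H$ into a polynomially long list of $1$-sparse Hermitian contractions that can be enumerated in logspace.

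First, I would write
\[H = D + \sum_{0\leq i<j\leq m-1} H_{ij},\]
where $D$ is the diagonal part of $H$ and $H_{ij}$ is the Hermitian matrix whose only nonzero entries are $H_{i,j}$ at position $(i,j)$ and $\overline{H_{i,j}}$ at position $(j,i)$. This gives $L=O(m^2)$ summands, each a $1$-sparse Hermitian with spectral norm bounded by $\|H\|\leq 1$. The exponential $e^{iH_{ij}t}$ acts as the identity outside the two-dimensional subspace spanned by $\ket{i}$ and $\ket{j}$ and as a fixed $2\times 2$ unitary inside it, determined entirely by $t$ and the single complex number $H_{i,j}$; likewise $e^{iDt}$ is a product of diagonal phase rotations, one per basis state.

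Next, I would apply the first-order Lie-Trotter formula with $N=\poly(m/\varepsilon)$ steps:
\[ e^{iH} \approx \left( e^{iD/N}\prod_{i<j} e^{iH_{ij}/N} \right)^{N}, \]
with the inner product taken in a fixed lexicographic order on pairs. Using the standard first-order Trotter error bound, expressed as a sum of commutator norms, the spectral-norm error of this approximation is $O(L^2/N)$, so choosing $N=\Theta(L^2/\varepsilon)=\poly(m/\varepsilon)$ keeps the Trotter error below $\varepsilon/2$. Each factor $e^{iH_{ij}/N}$ can then be realized in logspace by conjugating an elementary rotation with a logspace-computable permutation $P_{ij}$ on the $\log m$ qubits that sends $\ket{i}$ and $\ket{j}$ to $\ket{0\ldots0}$ and $\ket{0\ldots01}$: apply $P_{ij}$, then apply a $2\times 2$ unitary on the first qubit conditioned on the remaining qubits being zero, then apply $P_{ij}^{-1}$. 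The permutation $P_{ij}$ can be implemented by $O(\log m)$ controlled swaps whose structure is a deterministic logspace function of the bit patterns of $i$ and $j$, while the $2\times 2$ rotation can be approximated to spectral distance $\varepsilon/(4NL)$ using $\polylog(m/\varepsilon)$ universal gates produced by a standard logspace gate-synthesis routine from the input entry $H_{i,j}$.

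The main obstacle is maintaining the $O(\log(m/\varepsilon))$-qubit constraint across roughly $NL=\poly(m/\varepsilon)$ elementary operations while keeping the cumulative error below $\varepsilon$. This requires that every piece of classical bookkeeping -- the Trotter step index $t\in[N]$, the pair $(i,j)$, and the gate sequence approximating the $2\times 2$ rotation -- be generated on the fly from a single $O(\log(m/\varepsilon))$-bit counter, used, and cleanly uncomputed before the next step begins. The error composition is controlled by \cref{lemma:error}: since every Trotter factor and every approximating gate sequence is a contraction, the per-step synthesis errors add rather than compound multiplicatively, so the total synthesis error remains below $\varepsilon/2$ and the overall spectral-norm distance between the constructed unitary $U$ and $e^{iH}$ is at most $\varepsilon$, as required.
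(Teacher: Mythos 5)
This theorem is imported from~\cite{ta2013inverting}, and the paper gives no proof of its own; your argument follows essentially the same route as the cited proof --- an entrywise decomposition of $H$ into $O(m^2)$ two-level Hermitian pieces plus a diagonal part, a first-order Lie--Trotter product with $\poly(m/\varepsilon)$ steps, and a logspace, Solovay--Kitaev-based implementation of each two-level rotation, with the per-gate errors combined additively via \cref{lemma:error}. Your outline is correct as stated.
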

	We note that the dependence on $\varepsilon$ is improved in more recent Hamiltonian simulation algorithms \cite{berry2015simulating,fefferman2018complete}, however it is not required here.
	
	\paragraph{Phase estimation} Given the dimension $m$ and the error parameter $\varepsilon>0$, the \textit{phase estimation} circuit (see e.g. \cite[Section 5.2]{nielsen2002quantum}) acts on an input register of dimension $m$ and an estimation register of dimension $2^\ell=O(1/\varepsilon)$. The circuit is with time $O(2^\ell)$ and space $O(\ell+\log m)$, and accesses $2^\ell$ oracle calls to the controlled-$U$ gates, where $U\in\un_m$ is an arbitrary unitary matrix. For each $j\in[2^\ell]$, define $\lambda(j)=2j\pi/2^\ell-\pi$, and for any $\lambda\in[-\pi,\pi]$, let $J(\lambda)=\{j\in[2^\ell]\mid |\lambda(j)-\lambda|\leq\varepsilon\}$. If $v$ is a unit eigenvector of $U$ with eigenvalue $e^{i\lambda}$, the circuit maps $v\otimes\ket{0^\ell}$ to
	\[\sum_{j=0}^{2^\ell-1}\alpha_jv\otimes\ket{j},\]
	so that
	\[\sum_{j\in J(\lambda)} |\alpha_j|^2\geq 1-\varepsilon^2.\]
	\begin{remark}
		Notice that as long as $2^\ell>\pi\varepsilon^{-1}$, $|J(\lambda)|>0$. By properly choosing $\varepsilon$ and $\ell$, we can also have $|J(\lambda)|=1$ for almost every $\lambda$, but in the boundary cases we might still have $|J(\lambda)|=2$. The \textit{consistent phase estimation} in \cite{ta2013inverting} ensures that $|J(\lambda)|=1$ for every $\lambda$, but the process introduces randomness which is not allowed in unitary quantum algorithms, and the uniqueness of $j$ in $J(\lambda)$ is indeed not required for our analysis.
	\end{remark}
	
	Given a Hermitian contraction $H\in\cc^{m\times m}$, let $P_H$ be the above phase estimation circuit with $U=e^{iH}$, and $P_{H,\varepsilon}$ be the above phase estimation circuit where $U$ is replaced with the unitary quantum circuit constructed in \cref{thm:hamil} with error $2^{-\ell}\varepsilon$. Notice that $P_{H,\varepsilon}$ is a unitary quantum circuit with $\poly(m/\varepsilon)$ and space $O(\log(m/\varepsilon))$, and by \cref{lemma:error} we have $\|P_{H,\varepsilon}-P_H\|\leq \varepsilon$.
	
	Since $H$ only has eigenvalues in $[-1,1]$, we slightly modify the definition of $\lambda(j)$ so that it's truncated at $\pm1$, that is
	\[\lambda(j)=\left\{\begin{array}{ll}
		2j\pi/2^\ell-\pi & \textrm{ if }2j\pi/2^\ell-\pi\in[-1,1] \\
		\mathrm{sgn}(2j\pi/2^\ell-\pi) & \textrm{ otherwise }
	\end{array}\right.\]
	which will only make $J(\lambda)$ larger for $\lambda\in[-1,1]$.
	
	Every unit eigenvector $v$ of $H$ with eigenvalue $\lambda$ is also a unit eigenvector of $e^{iH}$ with eigenvalue $e^{i\lambda}$. Therefore for any two unit eigenvectors $u,v$ of $H$, we have
	\[\big(u\ct\otimes\bra{j}\big)P_H\mt\big(v\otimes\ket{0^\ell}\big)=\left\{\begin{array}{ll}
	\alpha_j & \textrm{ if }u=v \\
	0 & \textrm{ if }u\perp v.
	\end{array}\right.\]
	In other words, since $P_H$ is unitary,
	\[\big(u\ct\otimes\bra{0^\ell}\big)P_H^{-1}\big(v\otimes\ket{j}\big)=\left\{\begin{array}{ll}
	\overline{\alpha_j} & \textrm{ if }u=v \\
	0 & \textrm{ if }u\perp v.
	\end{array}\right.\]
	That means the projection of $P_H^{-1}\big(v\otimes\ket{j}\big)$ onto $\cc^m\otimes\ket{0^\ell}$ is along $v\otimes\ket{0^\ell}$ and has amplitude $\overline{\alpha_j}$. Combing the above observations we get the following lemma:
	
	\begin{lemma}\label{col:phase}
		Given a Hermitian contraction $H\in\cc^{m\times m}$ and $\varepsilon>0$, there is a unitary quantum circuit $P_{H,\varepsilon}$ with time $\poly(m/\varepsilon)$ and space $O(\log(m/\varepsilon))$ that is $\varepsilon$-close to a unitary operator $P_H$, which satisfies the following: There is a parameter $\ell=O(\log(1/\varepsilon))$, such that if $v$ is a unit eigenvector of $H$ with eigenvalue $\lambda\in[-1,1]$, then
		\[P_H\mt(v\otimes\ket{0^\ell})=\sum_{j=0}^{2^\ell-1}\alpha_jv\otimes\ket{j},
		\textrm{ where }\sum_{j\in J(\lambda)} |\alpha_j|^2\geq 1-\varepsilon^2.\]
		Moreover, for every $j\in[2^\ell]$,
		\[P_H^{-1}(v\otimes\ket{j})=\overline{\alpha_j}v\otimes\ket{0^\ell}+\ket{\bot},\]
		where $\ket{\bot}$ is a vector orthogonal to $\cc^m\otimes\ket{0^\ell}$.
	\end{lemma}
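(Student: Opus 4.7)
The plan is to assemble the lemma directly from the pieces laid out in the two preceding paragraphs; nothing new is really needed beyond organizing them.

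First, I would define the circuit $P_{H,\varepsilon}$ to be the phase-estimation circuit with the register size $2^\ell$ chosen so that $\ell=O(\log(1/\varepsilon))$ and $2^\ell>\pi\varepsilon^{-1}$, where each controlled-$e^{iH}$ call is replaced by the unitary given by \cref{thm:hamil} with accuracy $2^{-\ell}\varepsilon$. By \cref{thm:hamil} this Hamiltonian-simulation subroutine has time $\poly(m/\varepsilon)$ and space $O(\log(m/\varepsilon))$, and since phase estimation makes $2^\ell=O(1/\varepsilon)$ calls and $P_H$ is a product of $O(1/\varepsilon)$ contractions, \cref{lemma:error} (the ``contractions'' form, since both the true and approximate unitaries are contractions) yields $\|P_{H,\varepsilon}-P_H\|\leq \varepsilon$. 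The total resources remain $\poly(m/\varepsilon)$ time and $O(\log(m/\varepsilon))$ space, which handles the quantitative part of the statement.

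Next I would verify the first display. Since $H$ is Hermitian with eigenvalues in $[-1,1]$, any unit eigenvector $v$ of $H$ with eigenvalue $\lambda$ is a unit eigenvector of $U=e^{iH}$ with eigenvalue $e^{i\lambda}$, and $\lambda\in[-\pi,\pi]$ trivially. The standard phase-estimation guarantee restated in the excerpt then gives $P_H(v\otimes\ket{0^\ell})=\sum_{j}\alpha_j\,v\otimes\ket{j}$ with $\sum_{j\in J(\lambda)}|\alpha_j|^2\geq 1-\varepsilon^2$. The only subtlety is the modified definition of $\lambda(j)$ (truncated to $\pm 1$), but this only enlarges $J(\lambda)$ for $\lambda\in[-1,1]$, so the bound is preserved.

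For the second display I would compute matrix elements of $P_H^{-1}$ against eigenvectors. Let $\{w_k\}$ be an orthonormal eigenbasis of $H$ containing $v$. For any $w_k$, by unitarity
\[
\bigl(w_k\ct\otimes\bra{0^\ell}\bigr)P_H^{-1}\bigl(v\otimes\ket{j}\bigr)
=\overline{\bigl(v\ct\otimes\bra{j}\bigr)P_H\bigl(w_k\otimes\ket{0^\ell}\bigr)},
\]
which by the first display equals $\overline{\alpha_j}$ when $w_k=v$ and $0$ when $w_k\perp v$. Hence the projection of $P_H^{-1}(v\otimes\ket{j})$ onto $\cc^m\otimes\ket{0^\ell}$ is exactly $\overline{\alpha_j}\,v\otimes\ket{0^\ell}$, and the remainder lies in the orthogonal complement, giving the stated decomposition with $\ket{\bot}\perp\cc^m\otimes\ket{0^\ell}$.

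There is no real obstacle here: everything follows by bookkeeping from \cref{thm:hamil}, \cref{lemma:error}, the standard phase-estimation output, and unitarity. The only place where one has to be a bit careful is keeping the error budget of the Hamiltonian-simulation step small enough ($2^{-\ell}\varepsilon$ rather than $\varepsilon$) so that after $2^\ell$ controlled uses the total error is still at most $\varepsilon$; otherwise the lemma is purely a clean restatement of facts already established.
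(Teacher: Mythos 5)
Your proposal is correct and follows essentially the same route as the paper, which derives the lemma from exactly these ingredients: the phase-estimation circuit with the controlled-$e^{iH}$ calls replaced by the \cref{thm:hamil} circuit at accuracy $2^{-\ell}\varepsilon$, \cref{lemma:error} to get $\|P_{H,\varepsilon}-P_H\|\leq\varepsilon$, the observation that truncating $\lambda(j)$ only enlarges $J(\lambda)$, and unitarity of $P_H$ to transfer the matrix elements to $P_H^{-1}$ for the second display. No gaps.
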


	\paragraph{Pure State Preparation} Our results involve the simplest form of the \textit{quantum state preparation} problem, which is to map the initial state $\ket{0^S}$ to a given pure state. Combined with the efficient Solovay-Kitaev Theorem in \cite{melkebeek2012time} we have the following:
	\begin{lemma}\label{thm:prep}
		Given $m=2^S$, a unit vector $v\in\cc^m$ and $\varepsilon>0$, there is unitary quantum circuit $Q_v$ on $S$ qubits with time $O(m\cdot\polylog(1/\varepsilon))$ and space $O(\log(m/\varepsilon))$ such that $\|Q_v\ket{0^S}-v\|_2\leq\varepsilon$.
	\end{lemma}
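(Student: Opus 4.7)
The plan is the standard amplitude-splitting ``tree'' construction for state preparation, followed by the space-efficient Solovay-Kitaev theorem of~\cite{melkebeek2012time} to compile the continuous single-qubit rotations into gates from $\mathcal{G}$. Identify the computational basis of $\cc^m$ with binary strings in $\{0,1\}^S$. For each proper prefix $s\in\{0,1\}^k$ with $k<S$, let $p_s=\sum_{t\in\{0,1\}^{S-k}}|v_{st}|^2$ and define a single-qubit unitary $R_s\in\un_2$ on qubit $k+1$ so that, conditioned on qubits $1,\ldots,k$ being in state $\ket{s}$ and qubit $k+1$ being in state $\ket{0}$, applying $R_s$ correctly splits the amplitude between the subtrees rooted at $\ket{s0}$ and $\ket{s1}$ (with the right magnitude ratio $\sqrt{p_{s0}/p_s}$ vs.\ $\sqrt{p_{s1}/p_s}$ and the right relative phase, reducing at depth $S-1$ to the full complex amplitudes $v_{s0}$ and $v_{s1}$). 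Applying the multiply-controlled $R_s$'s in lexicographic order over all $m-1$ proper prefixes exactly maps $\ket{0^S}$ to $v$.

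Each multi-controlled single-qubit rotation is then expanded into $\mathrm{poly}(S)$ elementary gates via the standard ancilla-free Barenco-style construction, and each resulting continuous single-qubit rotation is approximated to spectral error $\varepsilon/m$ using $\polylog(m/\varepsilon)$ gates from $\mathcal{G}$ by the logspace Solovay-Kitaev algorithm of~\cite{melkebeek2012time}. By \cref{lemma:error} (applied to the unitary approximations) the total error of $Q_v$ compared to the exact preparation unitary is at most $(m-1)\cdot(\varepsilon/m)\leq\varepsilon$, so $\|Q_v\ket{0^S}-v\|_2\leq\varepsilon$, and the total gate count is $O(m\cdot\polylog(m/\varepsilon))$.

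The space analysis relies on emitting the circuit in a streaming fashion. The circuit description iterates over prefixes $s$ in lexicographic order, using $O(S)$ bits for $s$. For each $s$, the two angles defining $R_s$ are computed by a single sequential scan of the entries $\{v_{st}\}_t$ of $v$ with $O(\log(m/\varepsilon))$-bit fixed-point arithmetic (accumulating $p_{s0}$, $p_{s1}$, and the necessary complex sums), which uses $O(\log(m/\varepsilon))$ space. The Solovay-Kitaev subroutine then runs in space $O(\log(m/\varepsilon))$, emits the approximating gates, and the angles are discarded before the next prefix is processed. Hence the total working space stays $O(\log(m/\varepsilon))$.

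The main obstacle is precisely this streaming requirement: the full description of $Q_v$ has length exponential in $S$, so none of it can be stored, and both the classical preprocessing (the prefix enumeration and the amplitude sums) and the Solovay-Kitaev compilation must be arranged to produce one elementary gate at a time from the input $v$. The lexicographic enumeration of prefixes and the sequential computation of each $p_s$ are chosen specifically to make this possible within $O(\log(m/\varepsilon))$ space, and once that is set up the error bound and gate count follow from \cref{lemma:error} and the standard properties of Solovay-Kitaev.
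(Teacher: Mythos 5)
Your construction is correct, but it is not the route the paper takes: the paper builds $Q_v$ as a chain of $m-1$ two-level unitaries acting on consecutive coordinates $(i,i+1)$, where the $i$-th step fixes the entry $v_i$ exactly and pushes the remaining mass onto coordinate $i+1$ (each two-level unitary compiled space-efficiently via \cite{ta2013inverting}), whereas you use the binary tree of multiply-controlled single-qubit rotations over prefixes, compiled with a Barenco-style expansion plus Solovay--Kitaev --- essentially the construction of \cite{mottonen2005transformation}, which the paper explicitly cites as an alternative proof. Both routes work, and your lexicographic (preorder) processing of prefixes and streaming computation of the subtree sums give the required $O(\log(m/\varepsilon))$ space. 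Two comments. First, a small accounting slip: after the Barenco expansion each multi-controlled rotation contributes $\poly(S)$ continuous single-qubit gates, so the number of gates you must approximate is $(m-1)\cdot\poly(S)$, not $m-1$; with per-gate error $\varepsilon/m$, \cref{lemma:error} only gives total error $O(\varepsilon\cdot\polylog(m))$ rather than $\varepsilon$. The fix is immediate (take per-gate error $\varepsilon/(m\cdot\poly(S))$), and it only affects $\polylog$ factors in the gate count --- your $O(m\cdot\polylog(m/\varepsilon))$ versus the stated $O(m\cdot\polylog(1/\varepsilon))$ --- which is harmless for every use of the lemma in the paper. Second, on what each approach buys: your tree needs the subtree sums $p_s$, while the paper's staircase needs only the running tail sum $|v_i|^2+\cdots+|v_m|^2$, maintainable by one subtraction per step; this sequential structure is exactly what the remark following the paper's proof exploits when the amplitudes are not given explicitly but are produced by quantum subcircuits in the proof of \cref{thm:unitals}, so while your proof establishes the lemma as stated, the paper's decomposition is the more convenient form for that later application.
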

	\begin{proof}
		The circuit is a composition of $m-1$ two-level unitaries, that is, unitaries that operate on two dimensions of the computational basis). More specifically, starting from the initial state $\ket{0^S}$, in the $i$-th step the unitary $U_{a_i}=\begin{pmatrix}
		a_i & \sqrt{1-|a_i|^2} \\
		\sqrt{1-|a_i|^2} & \overline{a_i}
		\end{pmatrix}$ is applied on the $i$-th and $(i+1)$-th dimension, where $a_i=v_i/\sqrt{|v_i|^2+|v_{i+1}|^2+\cdots+|v_m|^2}$, so that the state after the $i$-th step is
		\[(v_1,\ldots,v_i,\sqrt{|v_{i+1}|^2+\cdots+|v_m|^2},0,\ldots,0).\]
		By \cite{ta2013inverting}, each two-level unitary can be implemented up to error $\varepsilon/m$ with time $O(m\cdot\polylog(1/\varepsilon))$ and space $O(\log(m/\varepsilon))$. By \cref{lemma:error} the total error is at most $\varepsilon$.
	\end{proof}
	For the application of the lemma in \cref{thm:unitals}, we note that the classical computation of $a_i$'s can be streamlined and incorporated into the quantum circuit, as one can maintain the tail sum $|v_i|^2+|v_{i+1}|^2+\cdots+|v_m|^2$ by starting from $1$ and subtract $|v_i|^2$ in each step. We also note that the result in \cref{thm:prep} was known in \cite{mottonen2005transformation} with a slightly different proof, and while the space complexity is not originally stated there, it is implicit from their construction.
	\section{Quantum Implementations of Contractions}\label{sect:impl}

\begin{lemma}\label{thm:implem}
	Given a contraction $A\in\cc^{m\times m}$ and $\varepsilon>0$, there is a unitary quantum circuit $Q_A$ with time $\poly(m/\varepsilon)$ and space $O(\log(m/\varepsilon))$, and a parameter $\ell=O(\log(1/\varepsilon))$, such that for unit vector $v$ of dimension $4m$, $\|Q_A\mt(v\otimes\ket{0^\ell})-(V_A\mt v)\otimes\ket{0^\ell}\|_2\leq\varepsilon$, where
	\[V_A=\diag(U_A,U_{A\ct})=\begin{pmatrix}
		A & \sqrt{\id_m-AA\ct} & &\\
		\sqrt{\id_m-A\ct A} & -A\ct & &\\
		 & & A\ct & \sqrt{\id_m-A\ct A} \\
		 & & \sqrt{\id_m-AA\ct} & -A
	\end{pmatrix}\]
\end{lemma}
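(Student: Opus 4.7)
The plan is to follow the phase-estimation framework of \cite{harrow2009quantum,ta2013inverting}, using phase estimation of the Hermitian dilation of $A$ to access its singular values and then performing a controlled rotation. Set $H=\begin{pmatrix}0 & A\\A\ct & 0\end{pmatrix}\in\cc^{2m\times 2m}$; since $\|A\|\leq 1$, $H$ is a Hermitian contraction with all eigenvalues in $[-1,1]$. The target $V_A$, up to a fixed permutation of the four basis elements of the outer two qubits, coincides with $U_H=\begin{pmatrix}H & \sqrt{\id_{2m}-H^2}\\ \sqrt{\id_{2m}-H^2} & -H\end{pmatrix}$, which decomposes in the $H$-eigenbasis as $U_H=\sum_\lambda U_\lambda\otimes\ket{u_\lambda}\bra{u_\lambda}$ with $U_\lambda=\begin{pmatrix}\lambda & \sqrt{1-\lambda^2}\\ \sqrt{1-\lambda^2} & -\lambda\end{pmatrix}$ and $u_\lambda$ ranging over unit eigenvectors of $H$. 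Hence it suffices to implement $U_H$ up to spectral error $\varepsilon$; converting to the block-diagonal form $V_A=\diag(U_A,U_{A\ct})$ is a relabeling of the 4-dim outer register by a fixed, $A$-independent permutation, implementable by a constant number of CNOT/Toffoli gates.

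The circuit $Q_A$ acts on three parts: the outer qubit of $U_H$, the $2m$-dim inner register, and an $\ell$-qubit phase ancilla prepared in $\ket{0^\ell}$, with $\ell=O(\log(1/\varepsilon'))$ for a suitably small $\varepsilon'$. The three stages are: first, apply $P_{H,\varepsilon'}$ from \cref{col:phase} on the inner and phase registers, approximately sending $\ket{b}\otimes u_\lambda\otimes\ket{0^\ell}$ to $\sum_j\alpha_j\ket{b}\otimes u_\lambda\otimes\ket{j}$ with $\sum_{j\in J(\lambda)}|\alpha_j|^2\geq 1-(\varepsilon')^2$; second, apply the controlled rotation $\sum_j \ket{j}\bra{j}\otimes U_{\lambda(j)}$ on the outer qubit, where each $U_{\lambda(j)}$ is a $2\times 2$ unitary with entries computable from $j$ in space $O(\ell)$ and approximable to error $\varepsilon'/2^\ell$ via Solovay-Kitaev; third, apply $P_{H,\varepsilon'}^{-1}$ to uncompute the phase register. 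Combining these and invoking the second part of \cref{col:phase}, the component of the output lying in $\cc^{4m}\otimes\ket{0^\ell}$ equals $\left(\sum_j|\alpha_j|^2 U_{\lambda(j)}\right)\ket{b}\otimes u_\lambda\otimes\ket{0^\ell}$ up to the $\varepsilon'$-error from $\|P_{H,\varepsilon'}-P_H\|\leq\varepsilon'$, while the orthogonal complement has mass at most $O(\varepsilon')$.

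The overall error then reduces to bounding $\left\|\sum_j|\alpha_j|^2 U_{\lambda(j)}-U_\lambda\right\|$: I would split according to $J(\lambda)$, so that the contribution from $j\notin J(\lambda)$ is at most $2(\varepsilon')^2$ and for $j\in J(\lambda)$ the difference $\|U_{\lambda(j)}-U_\lambda\|$ is controlled by continuity of $\lambda\mapsto U_\lambda$. Extending linearly over the $H$-eigenbasis handles arbitrary unit $v\in\cc^{4m}$, and \cref{lemma:error} aggregates the various approximation errors (phase estimation, Solovay-Kitaev, uncomputation) into a uniform bound on $\|Q_A(v\otimes\ket{0^\ell})-(V_A v)\otimes\ket{0^\ell}\|_2$.

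The main obstacle I anticipate is this last continuity estimate: while $|\lambda(j)-\lambda|\leq\varepsilon'$ for $j\in J(\lambda)$, the off-diagonal entry $\sqrt{1-\lambda^2}$ is only $1/2$-H\"older continuous near $\lambda=\pm 1$, so $\|U_{\lambda(j)}-U_\lambda\|=O(\sqrt{\varepsilon'})$ in the worst case. To reach total error $\varepsilon$ one sets $\varepsilon'=\Theta(\varepsilon^2)$; since \cref{col:phase} remains $\poly(m/\varepsilon')$-time and $O(\log(m/\varepsilon'))$-space even then, the promised $\poly(m/\varepsilon)$ time and $O(\log(m/\varepsilon))$ space bounds are preserved.
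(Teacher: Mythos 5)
Your construction is the same as the paper's: dilate $A$ to the Hermitian contraction $H$, observe that $V_A$ equals $U_H$ up to a fixed two-qubit permutation, and implement $U_H$ by phase estimation of $e^{iH}$, a $\ket{j}$-controlled reflection $U_{\lambda(j)}$, and uncomputation. The one step that is not justified as stated is your claim that the component of the output orthogonal to $\cc^{4m}\otimes\ket{0^\ell}$ has mass $O(\varepsilon')$. After the inverse phase estimation, all you know directly is that the good component has norm $1-O(\delta)$, where $\delta$ controls $\bigl\|\sum_j|\alpha_j|^2U_{\lambda(j)}-U_\lambda\bigr\|$ together with the $P_{H,\varepsilon'}$-versus-$P_H$ discrepancies; by norm accounting alone the orthogonal part can then be as large as $O(\sqrt{\delta})$, not $O(\varepsilon')$. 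The paper confronts exactly this issue with the inner-product argument showing $\|\ket{\epsilon_6}+\ket{\bot}\|_2\le\sqrt{2\|\ket{\epsilon_6}\|_2}$, which is why it sets $\varepsilon_1\le\varepsilon^2/12$. Combined with your (correct) observation that $\|U_{\lambda(j)}-U_\lambda\|$ can be $\Theta(\sqrt{\varepsilon'})$ near $\lambda=\pm1$, that square-root loss means your choice $\varepsilon'=\Theta(\varepsilon^2)$ does not close under your bookkeeping: you would need $\varepsilon'=\Theta(\varepsilon^4)$, or a finer argument, e.g.\ write the output as $\sum_{k,j}\omega_k\alpha_{j,k}\bigl(U_{\lambda(j)}\ket{b}\bigr)\otimes P_H^{-1}\bigl(u_k\otimes\ket{j}\bigr)$ and use that the vectors $P_H^{-1}(u_k\otimes\ket{j})$ are orthonormal, so the deviation from $(U_Hv)\otimes\ket{0^\ell}$ is bounded by the root-mean-square of the per-$(k,j)$ deviations, which is $O(\sqrt{\varepsilon'})$ and does make $\varepsilon'=\Theta(\varepsilon^2)$ sufficient. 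Either repair stays within time $\poly(m/\varepsilon)$ and space $O(\log(m/\varepsilon))$, so the lemma is unaffected.

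On the H\"older point itself you are more careful than the paper: its inline estimate $\|U_{\lambda(j)}-U_{\lambda_k}\|\le|\lambda(j)-\lambda_k|$ fails near $\lambda=\pm1$ (the exact value is $\sqrt{(\lambda(j)-\lambda_k)^2+(\sqrt{1-\lambda(j)^2}-\sqrt{1-\lambda_k^2})^2}$, which can be $\sqrt{2|\lambda(j)-\lambda_k|}$), and this regime genuinely occurs, e.g.\ when $A$ is unitary as in the corollary. Shrinking $\varepsilon'$ polynomially is the right kind of fix; it just has to be chosen consistently with the orthogonal-junk accounting above.
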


\begin{proof}
	Let $H$ be the Hermitian contraction $\begin{pmatrix} & A \\ A\ct & \end{pmatrix}$. Notice that
	\[U_H=\begin{pmatrix}
		H & \sqrt{\id_{2m}-H^2} \\
		\sqrt{\id_{2m}-H^2} & -H
	\end{pmatrix}=\begin{pmatrix}
		 & A & \sqrt{\id_m-AA\ct} &\\
		A\ct & & & \sqrt{\id_m-A\ct A}\\
		\sqrt{\id_m-AA\ct} & & & -A\\
		 & \sqrt{\id_m-A\ct A} & -A\ct & 
	\end{pmatrix}\]
	which differs from $V_A$ only by permutations:
	\[V_A=\begin{pmatrix}
		\id_m & & & \\
		& & & \id_m \\
		& \id_m & & \\
		& & \id_m &
	\end{pmatrix}\cdot U_H\cdot\begin{pmatrix}
	& & \id_m & \\
	\id_m & & & \\
	& \id_m & &\\
	& & & \id_m 
	\end{pmatrix}\]
	Since the permutations are only on two qubits, it suffices to implement $U_H$ on $v$ up to error $\varepsilon$. 
	
	Let $v=\begin{pmatrix}
		v_1 \\ v_2
	\end{pmatrix}$ where both $v_1$ and $v_2$ are of dimension $2m$. Suppose $H$ has the eigen decomposition $H=\sum\limits_{k=1}^{2m}\lambda_k\mt u_k\ct u_k\mt$, and $v_1,v_2$ are decomposed into this eigenbasis as 
	\[v_1=\sum_{k=1}^{2m}\omega_k^{(0)}u_k\mt,\quad v_2=\sum_{k=1}^{2m}\omega_k^{(1)}u_k\mt,\quad\textrm{where }
	\sum_{k=1}^{2m}\left|\omega_k^{(0)}\right|^2+\sum_{k=1}^{2m}\left|\omega_k^{(1)}\right|^2=1.\]
	Since $v$ can be written as $\ket{0}\otimes v_1+\ket{1}\otimes v_2$, applying the phase estimation circuit $P_{H,\varepsilon_1}$ in \cref{col:phase} on $v\otimes\ket{0^\ell}$ results in:
	\begin{align*}
		&\sum_{k=1}^{2m}\sum_{j=0}^{2^\ell-1} \omega_k^{(0)}\alpha_{j,k}\mt\ket{0}\otimes u_k\mt\otimes\ket{j}+
		\sum_{k=1}^{2m}\sum_{j=0}^{2^\ell-1} \omega_k^{(1)}\alpha_{j,k}\mt\ket{1}\otimes u_k\mt\otimes\ket{j}+ \ket{\epsilon_1} \\
		=& \sum_{k=1}^{2m}\sum_{j\in J(\lambda_k)} \omega_k^{(0)}\alpha_{j,k}\mt\ket{0}\otimes u_k\mt\otimes\ket{j}+
		\sum_{k=1}^{2m}\sum_{j\in J(\lambda_k)} \omega_k^{(1)}\alpha_{j,k}\mt\ket{1}\otimes u_k\mt\otimes\ket{j} + \ket{\epsilon_2}.
	\end{align*}
	where for each $k$ it holds $\sum_{j\in J(\lambda_k)}|\alpha_{j,k}|^2\geq 1-\varepsilon_1^2$. Here $\varepsilon_1$ is an error parameter to be determined later, and $\ell=O(\log(1/\varepsilon_1))$. The error vector $\ket{\epsilon_1}$ is introduced due to the difference between $P_{H,\varepsilon_1}$ and $P_H$, and thus $\|\ket{\epsilon_1}\|_2\leq \|P_{H,\varepsilon_1}-P_H\|\leq\varepsilon_1$. The error vector $\ket{\epsilon_2}-\ket{\epsilon_1}$ is a weighted sum of $4m$ orthogonal error vectors, with lengths at most $\varepsilon_1$ and weights $\omega_k^{(0)},\omega_k^{(1)}$, and thus has length at most $\varepsilon_1$. Therefore  $\|\ket{\epsilon_2}\|_2\leq 2\varepsilon_1$.
	
	Now apply the following unitary transformation on the first qubit and last $\ell$ qubits:
	\begin{align*}
		\ket{0}\ket{j}& \rightarrow \lambda(j)\ket{0}\ket{j}+\sqrt{1-\lambda(j)^2}\ket{1}\ket{j} \\
		\ket{1}\ket{j}& \rightarrow \sqrt{1-\lambda(j)^2}\ket{0}\ket{j}-\lambda(j)\ket{1}\ket{j}
	\end{align*}
	which gives
	\begin{align*}
		& \sum_{k=1}^{2m}\sum_{j\in J(\lambda_k)} \omega_k^{(0)}\alpha_{j,k}\mt 
		\left[\lambda(j)\ket{0}+\sqrt{1-\lambda(j)^2}\ket{1}\right]\otimes u_k\mt\otimes\ket{j} \\
		+& \sum_{k=1}^{2m}\sum_{j\in J(\lambda_k)} \omega_k^{(1)}\alpha_{j,k}\mt
		\left[\sqrt{1-\lambda(j)^2}\ket{0}-\lambda(j)\ket{1}\right]\otimes u_k\mt\otimes\ket{j}+\ket{\epsilon_3}
	\end{align*}
	This unitary transformation can be implemented as a serial combination of $2^\ell$ single-qubit unitaries $U_{\lambda(j)}$ controlled by the last $\ell$ qubits representing $j$. Each one of them can be constructed up to error $2^{-\ell}\varepsilon_1$ in time $\polylog(1/\varepsilon_1)$ and space $O(\log(1/\varepsilon_1))$ by \cite[Theorem 7]{melkebeek2012time}. Therefore by \cref{lemma:error} we have $\|\ket{\epsilon_3}\|_2\leq \|\ket{\epsilon_2}\|_2+\varepsilon_1\leq 3\varepsilon_1$.
	
	Finally applying the reverse phase estimation $P_{H,\varepsilon_1}^{-1}$ gives the following state, where $\ket{\bot}$ is orthogonal to $\cc^2\otimes\cc^{2m}\otimes\ket{0^\ell}$:
	\begin{align*}
		& \sum_{k=1}^{2m}\sum_{j\in J(\lambda_k)} |\alpha_{j,k}|^2\omega_k^{(0)}
		\left[\lambda(j)\ket{0}+\sqrt{1-\lambda(j)^2}\ket{1}\right]
		\otimes u_k\mt\otimes\ket{0^\ell} \\
		&+ \sum_{k=1}^{2m}\sum_{j\in J(\lambda_k)} |\alpha_{j,k}|^2\omega_k^{(1)}
		\left[\sqrt{1-\lambda(j)^2}\ket{0}-\lambda(j)\ket{1}\right]
		\otimes u_k\mt\otimes\ket{0^\ell}
		+\ket{\epsilon_4}+\ket{\bot} \\
		= & \sum_{k=1}^{2m}\sum_{j\in J(\lambda_k)} |\alpha_{j,k}|^2 \omega_k^{(0)}
		\left[\lambda_k\ket{0}+\sqrt{1-\lambda_k^2}\ket{1}\right]
		\otimes u_k\mt\otimes\ket{0^\ell} \\
		&+ \sum_{k=1}^{2m}\sum_{j\in J(\lambda_k)} |\alpha_{j,k}|^2 \omega_k^{(1)}
		\left[\sqrt{1-\lambda_k^2}\ket{0}-\lambda_k\ket{1}\right]
		\otimes u_k\mt\otimes\ket{0^\ell}+\ket{\epsilon_5}+\ket{\bot} \\
		= & \sum_{k=1}^{2m}\omega_k^{(0)}
		\left[\lambda_k\ket{0}+\sqrt{1-\lambda_k^2}\ket{1}\right]\otimes u_k\mt\otimes\ket{0^\ell} \\
		&+ \sum_{k=1}^{2m}\omega_k^{(1)}
		\left[\sqrt{1-\lambda_k^2}\ket{0}-\lambda_k\ket{1}\right]\otimes u_k\mt\otimes\ket{0^\ell}+\ket{\epsilon_6}+\ket{\bot} \\
		= & \sum_{k=1}^{2m}\omega_k^{(0)}\big[U_H\mt(\ket{0}\otimes u_k\mt)\big]\otimes\ket{0^\ell}
		+ \sum_{k=1}^{2m}\omega_k^{(1)}\big[U_H\mt(\ket{1}\otimes u_k\mt)\big]\otimes\ket{0^\ell}+\ket{\epsilon_6}+\ket{\bot} \\
		= & \big[U_H\mt(\ket{0}\otimes v_1)\big]\otimes\ket{0^\ell}
		+\big[U_H\mt(\ket{1}\otimes v_2)\big]\otimes\ket{0^\ell}+\ket{\epsilon_6}+\ket{\bot} \\
		= & (U_H\mt v)\otimes\ket{0^\ell}+\ket{\epsilon_6}+\ket{\bot}.
	\end{align*}
	Here $\|\ket{\epsilon_4}\|_2\leq \|\ket{\epsilon_3}\|_2+\|P_{H,\varepsilon_1}^{-1}-P_H^{-1}\|\leq 4\varepsilon_1$.
	Also, similar to the reasoning for $\ket{\epsilon_2}-\ket{\epsilon_1}$, since for every $k$, $1-\varepsilon_1^2\leq \sum_{j\in J(\lambda_k)} |\alpha_{j,k}|^2\leq 1$, and for every $j\in J(\lambda_k)$, $\|U_{\lambda(j)}-U_{\lambda_k}\|_2\leq |\lambda(j)-\lambda_k|\leq \varepsilon_1$, we have 
	\[\|\ket{\epsilon_6}\|_2\leq \|\ket{\epsilon_5}\|_2+\varepsilon_1^2
	\leq \|\ket{\epsilon_4}\|_2+\varepsilon_1+\varepsilon_1^2\leq 6\varepsilon_1.\]
	Finally, notice that both $(U_H\mt v)\otimes\ket{0^\ell}$ and $(U_H\mt v)\otimes\ket{0^\ell}+\ket{\epsilon_6}+\ket{\bot}$ are unit vectors, while $\ket{\bot}$ is orthogonal to $(U_H\mt v)\otimes\ket{0^\ell}$, so we have
	\[\big|\big((U_H\mt v)\ct\otimes\bra{0^\ell}\big)\big((U_H\mt v)\otimes\ket{0^\ell}+\ket{\epsilon_6}+\ket{\bot}\big)\big|=\big|1+\big((U_H\mt v)\ct\otimes\bra{0^\ell}\big)\ket{\epsilon_6}\big|\geq 1-\|\ket{\epsilon_6}\|_2,\]
	which implies that $\|\ket{\epsilon_6}+\ket{\bot}\|_2\leq\sqrt{2\|\ket{\epsilon_6}\|_2}$. Therefore it suffices to take $\varepsilon_1\leq \varepsilon^2/12$, and the theorem follows.
\end{proof}

As a by product, when we take the contraction $A$ in \cref{thm:implem} to be unitary, we get the unitary implementation of any unitary matrix, with the number of ancillas only depending on the error:
\begin{corollary}\label{col:implem}
	Given a unitary matrix $U\in\un_m$ and $\varepsilon>0$, there is a unitary quantum circuit $Q_U$ with time $\poly(m/\varepsilon)$ and space $O(\log(m/\varepsilon))$, and a parameter $\ell=O(\log(1/\varepsilon))$, such that for any unit vector $v$ of dimension $m$, $\|Q_U\mt(v\otimes\ket{0^\ell})-(Uv)\otimes\ket{0^\ell}\|_2\leq\varepsilon$.
\end{corollary}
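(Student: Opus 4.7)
The plan is to specialize \cref{thm:implem} to the case when the contraction $A$ equals the given unitary $U$. The key algebraic observation is that when $A = U \in \un_m$, both $\id_m - UU\ct$ and $\id_m - U\ct U$ vanish, so the off-diagonal square-root blocks in $U_A$ disappear and $U_U$ reduces to the block-diagonal matrix $\diag(U, -U\ct)$. Consequently, $V_U = \diag(U, -U\ct, U\ct, -U)$ is a $4m \times 4m$ block-diagonal unitary whose top-left $m \times m$ block is exactly $U$.

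The second step is to reconcile the $4m$-dimensional input format of \cref{thm:implem} with the $m$-dimensional input format demanded by the corollary. To do this, prepend two fresh ancilla qubits initialized to $\ket{00}$ to the input $v \in \cc^m$, forming the $4m$-dimensional unit vector $w = \ket{00} \otimes v$. Since $\ket{00}$ selects the top-left block of the block-diagonal $V_U$, we have $V_U w = \ket{00} \otimes (Uv)$ exactly. Applying the circuit $Q_A$ from \cref{thm:implem} with $A = U$ and accuracy $\varepsilon$ on input $w \otimes \ket{0^{\ell'}}$ therefore produces a state that is $\varepsilon$-close to $\ket{00} \otimes (Uv) \otimes \ket{0^{\ell'}}$. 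Absorbing the two selector qubits into the ancilla register (so that the new $\ell = \ell' + 2$ remains $O(\log(1/\varepsilon))$) yields the circuit $Q_U$ in precisely the format required by the corollary, and the time and space complexities $\poly(m/\varepsilon)$ and $O(\log(m/\varepsilon))$ are inherited directly from \cref{thm:implem}.

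There is no significant obstacle here: all the real work is done by \cref{thm:implem}, and the corollary requires only the observation that the square-root terms collapse when $A$ is unitary, together with the minor bookkeeping of pinning two selector qubits to $\ket{00}$ to isolate the correct block of $V_U$. The only thing one might want to double-check is that adding two fixed ancilla qubits, together with the trivial initialization and (implicit) discarding of them, does not disturb the unitary character of the circuit or inflate the resource bounds beyond constants, which is immediate.
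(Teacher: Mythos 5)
Your proposal is correct and matches the paper's proof essentially verbatim: both specialize \cref{thm:implem} to $A=U$, note that $V_U=\diag(U,-U\ct,U\ct,-U)$ because the square-root blocks vanish, pin two ancilla qubits to $\ket{00}$ so that $V_U(\ket{00}\otimes v)=\ket{00}\otimes(Uv)$, and absorb those two qubits into the $\ket{0^\ell}$ ancilla register. No differences worth noting.
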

\begin{proof}
	Use the exact same circuit in \cref{thm:implem} by adding two ancilla qubits to $v$ initialized at $\ket{00}$. Notice that $V_U=\diag(U,-U\ct,U\ct,-U)$, and thus the output state is $\varepsilon$ close to $[V_U(\ket{00}\otimes v)]\otimes\ket{0^\ell}=\ket{00}\otimes (Uv)\otimes\ket{0^\ell}$. Rearranging the order of qubits and the claim follows.
\end{proof}

Finally, for permutation matrices, we present a simple unitary implementation without any ancillas by decomposing it into transpositions. The following theorem will be used in the proof of \cref{thm:power} but not necessarily, as \cref{col:implem} is enough for the job. Nevertheless, the decomposition of permutations in logarithmic space might be of independent interest, so we defer the proof to Appendix~\ref{app:perm}.

\begin{lemma}\label{thm:perm}
	Given a permutation $\sigma\in S_m$ and $\varepsilon>0$, there is a unitary quantum circuit $U$ with time $\poly(m/\varepsilon)$ and space $O(\log(m/\varepsilon))$, such that $\|U-P_\sigma\|\leq\varepsilon$, where $P_\sigma\in\{0,1\}^{m\times m}$ is the matrix representation of $\sigma$.
\end{lemma}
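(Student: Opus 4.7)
The plan is to decompose $\sigma$ into $O(m)$ transpositions, enumerate them in a canonical order using $O(\log m)$ space, and then implement each transposition by the two-level unitary subroutine from \cite{ta2013inverting} that was already invoked in the proof of \cref{thm:prep}. Since a permutation matrix is unitary and disjoint cycles commute, the composition of the approximate transpositions will be $\varepsilon$-close to $P_\sigma$.

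For the decomposition, I would write $\sigma$ as a product of its disjoint cycles. For each cycle $C = (c, \sigma(c), \sigma^2(c), \ldots, \sigma^{k-1}(c))$ with $c = \min C$, the standard identity
\[C = (c,\sigma^{k-1}(c)) \cdots (c,\sigma^{2}(c))\,(c,\sigma(c))\]
(operations composed right-to-left) expresses $C$ as $k-1$ transpositions, so $\sigma$ becomes a product of at most $m-1$ of them. The enumeration algorithm iterates $i = 0, 1, \ldots, m-1$; for each $i$, it first checks whether $i$ is the minimum of its cycle by iterating $j \mapsto \sigma(j)$ starting from $j = i$ and watching for a value smaller than $i$ before the orbit returns to $i$, and if so it applies the transpositions $(i, \sigma(i)), (i, \sigma^2(i)), \ldots, (i, \sigma^{k-1}(i))$ in that order. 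All that needs to be maintained throughout is $i$ and the current iterate $\sigma^s(i)$, each of $O(\log m)$ bits.

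Each transposition $(a, b)$ is a two-level unitary on $\cc^m$ --- the $2 \times 2$ swap on the subspace spanned by $\ket{a}$ and $\ket{b}$, identity elsewhere --- which can be implemented to error $\varepsilon/m$ in time $O(m \cdot \polylog(m/\varepsilon))$ and space $O(\log(m/\varepsilon))$ by the same two-level unitary construction of \cite{ta2013inverting} used in the proof of \cref{thm:prep}, with all ancillas returned to $\ket{0}$ so that they can be reused across transpositions. Composing at most $m$ such approximate transpositions and invoking the ``furthermore'' clause of \cref{lemma:error} (both the ideal transpositions and their approximations are contractions) bounds the total error by $m \cdot (\varepsilon/m) = \varepsilon$; the total time is $\poly(m/\varepsilon)$ and the space stays $O(\log(m/\varepsilon))$ because the transpositions are processed sequentially.

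The main delicate point is verifying that the order in which the logspace enumeration produces the transpositions really composes to $P_\sigma$, which reduces to the cycle-decomposition identity above together with the commutativity of disjoint cycles (so that processing cycles in order of their minima is consistent). Beyond that, the implementation and error analysis follow essentially the same template as \cref{thm:prep}.
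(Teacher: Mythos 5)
Your proof is correct, and it reaches the lemma by a genuinely different decomposition than the paper's. The paper defines, for each index $i$, the value $\tau(i)$ as the first element of the forward orbit $\sigma(i),\sigma^2(i),\ldots$ that is at least $i$, and proves the identity $\sigma=(1\ \tau(1))(2\ \tau(2))\cdots(m\ \tau(m))$, so it gets exactly one transposition per index, each computable in $O(\log m)$ space independently of the others; the price is that this identity needs a somewhat delicate case analysis and induction (the claim in Appendix A). You instead use the textbook cycle decomposition: detect in logspace whether $i$ is the minimum of its cycle by walking the orbit, and if so emit $(i,\sigma(i)),(i,\sigma^2(i)),\ldots,(i,\sigma^{k-1}(i))$; applying these in that order in the circuit gives the matrix product $(i\ \sigma^{k-1}(i))\cdots(i\ \sigma(i))$, which is the cycle, and disjoint cycles commute, so processing cycles by their minima yields $P_\sigma$. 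Your correctness argument is essentially immediate, at the cost of a slightly more dynamic enumeration (a variable number of transpositions per index and a nested orbit walk, still $O(\log m)$ space and $\poly(m)$ time), whereas the paper's buys a more uniform ``one transposition per gate index'' circuit description. From that point on the two proofs coincide: at most $m$ transpositions, each a two-level unitary implemented to error $\varepsilon/m$ by the construction of \cite{ta2013inverting} exactly as in \cref{thm:prep}, composed using the contraction clause of \cref{lemma:error} (the approximations are unitary, hence contractions) to give total error $\varepsilon$, with time $\poly(m/\varepsilon)$ and space $O(\log(m/\varepsilon))$.
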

	\section{Contraction Powering in Quantum Logspace}\label{sect:cpowering}

\begin{definition}[Contraction Powering]
	Given $m=2^S$, a contraction $A\in\cc^{m\times m}$, a positive integer $T$ in unary, and two vectors $v,w\in\cc^m$ with $\|v\|_2=\|w\|_2=1$ as the input, it is promised that $|w\ct A^Tv|^2$ is either in $[0,1/3]$ or $[2/3,1]$, and the goal of the \textsc{ContractionPowering} problem is to distinguish between the two cases.
\end{definition}

\begin{theorem}\label{thm:power}
	$\textsc{ContractionPowering}\in\mathsf{promiseBQ_UL}$. Moreover, given the same input \\ $(m,A,T,v,w)$ but without the promise on $|w\ct A^Tv|^2$, while also given an error parameter $\varepsilon>0$, there is a unitary quantum circuit $W$ with time $\poly(mT/\varepsilon)$ and space $S'=O(\log(mT/\varepsilon))$ such that $|\bra{0^{S'}}W\ket{0^{S'}}|^2$ is $\varepsilon$-close to $|w\ct A^Tv|^2$.
\end{theorem}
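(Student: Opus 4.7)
The plan is to reduce computing $w\ct A^T v$ to extracting a single matrix element of an iterated block encoding. The trick is to embed $A$ into a larger nilpotent matrix whose block structure turns iteration into a deterministic counter, so that the ``$T$-th power'' behavior is isolated in a unique sector of the output state. Concretely, I define
\[
A_T \;=\; \sum_{k=1}^{T} |k+1\rangle\langle k|\otimes A \;\in\;\cc^{(T+1)m\times(T+1)m},
\]
placing $A$ on the super-diagonal blocks of a $(T+1)m$-dimensional matrix. Since $\|A_T\|=\|A\|\leq 1$, $A_T$ is itself a contraction, and its entries, as well as those of the Hermitian dilation $\begin{pmatrix}0&A_T\\A_T\ct&0\end{pmatrix}$, are just entries of $A$ with a simple block-index shift, hence computable in deterministic logspace. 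Applying \cref{thm:implem} to $A_T$ therefore yields a unitary circuit $Q_{A_T}$ of time $\poly(mT/\varepsilon_1)$ and space $O(\log(mT/\varepsilon_1))$ that approximates $V_{A_T}$ within error $\varepsilon_1$.

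The reason this embedding works is that in the iteration of $V_{A_T}$, the block-index register acts as a counter tracking successful applications of $A$. A direct calculation gives, for $k\leq T$,
\[
V_{A_T}\,|00\rangle|k\rangle|x\rangle \;=\; |00\rangle|k+1\rangle\,A|x\rangle \;+\; |01\rangle|k\rangle\,\sqrt{\id-A\ct A}\,|x\rangle,
\]
and analogous formulas in sector $|01\rangle$ showing that the block index can change by at most $+1$ per step and can only strictly increase when the sector transitions from $|00\rangle$ to $|00\rangle$. Starting from $|00\rangle|1\rangle|v\rangle$ and iterating $V_{A_T}$ exactly $T$ times, the only trajectory that can reach block $T+1$ is the ``all good'' path that stays in sector $|00\rangle$ at every step and increments the block by $1$ each time, multiplying the vector by $A$ at each step. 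Hence
\[
\bigl(\langle 00|\otimes\langle T+1|\otimes\id_m\bigr)\, V_{A_T}^T\, |00\rangle|1\rangle|v\rangle \;=\; A^T v.
\]

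I then assemble $W$ on $S'=O(\log(mT/\varepsilon))$ qubits, comprising a $\log m$-qubit vector register, a $\log(T+2)$-qubit block-index register, two sector qubits of $V_{A_T}$, and the phase-estimation ancilla of $Q_{A_T}$. The circuit $W$ performs, in order: (i) prepare $|v\rangle$ on the vector register using \cref{thm:prep} with error $\varepsilon/4$; (ii) permute the block register to send $|0\rangle$ to $|1\rangle$; (iii) apply $Q_{A_T}$ a total of $T$ times, each with error $\varepsilon/(4T)$; (iv) permute the block register to send $|T+1\rangle$ back to $|0\rangle$; (v) apply the inverse of the $w$-preparation circuit on the vector register so that $|w\rangle$ is mapped to $|0^{\log m}\rangle$. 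Propagating the per-step error through the iteration via \cref{lemma:error} and adding the two state-preparation errors yields $|\langle 0^{S'}|W|0^{S'}\rangle-w\ct A^T v|=O(\varepsilon)$, so $|\langle 0^{S'}|W|0^{S'}\rangle|^2$ is $\varepsilon$-close to $|w\ct A^T v|^2$, in time $\poly(mT/\varepsilon)$ and space $O(\log(mT/\varepsilon))$.

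For the first part of the theorem, $\textsc{ContractionPowering}\in\mathsf{promiseBQ_UL}$, I apply this $W$ with a small constant $\varepsilon$ and then append a multi-controlled-$X$ gate flipping a designated output qubit iff all of the other $S'$ qubits are in state $|0\rangle$, measuring that single output qubit at the end. Its ``$1$'' probability equals $|\langle 0^{S'}|W|0^{S'}\rangle|^2$, which by the promise is either in $[0,1/3+\varepsilon]$ or in $[2/3-\varepsilon,1]$, giving the required constant distinguishing gap. The only ingredient requiring a routine check is the deterministic-logspace computability of the entries of the dilation $H_T$; this reduces immediately to the block-sparsity of $A_T$, so I don't foresee any conceptual obstacle beyond this bookkeeping.
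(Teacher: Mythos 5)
Your proof is correct, and while it rests on the same engine as the paper's proof (the block-encoding of \cref{thm:implem}, state preparation via \cref{thm:prep}, and error accumulation via \cref{lemma:error}), the way you isolate the good branch is genuinely different. The paper repeatedly applies the fixed $4m$-dimensional block-encoding $V_A$ and interleaves an explicit cyclic permutation (\cref{thm:perm}) on a separate $2T$-dimensional counter register, so that the junk produced by the $\sqrt{\id_m-A\ct A}$ blocks is permanently quarantined in counter values that $V_A$ never touches again; correctness there hinges on the invariant that the $\ket{1}_C$ slot is empty before each application. You instead bake the counter into the matrix itself via the nilpotent shift $A_T=\sum_k\ket{k+1}\bra{k}\otimes A$, block-encode that one larger contraction, and iterate it; your correctness argument is the net-displacement count (block index moves by $+1$ only on $\ket{00}\to\ket{00}$ steps and by $0$ or $-1$ otherwise, so reaching block $T+1$ in $T$ steps forces the all-$A$ path), which is needed because in your construction junk \emph{can} re-enter the good sector via $\ket{01}\to\ket{00}$ transitions but can never catch up in block index. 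This buys a cleaner single invocation of \cref{thm:implem} with no interleaved permutations (only trivial relabelings at the start and end), at the cost of block-encoding a $\poly(mT)$-dimensional matrix rather than a $4m$-dimensional one, which is still within the allowed time $\poly(mT/\varepsilon)$ and space $O(\log(mT/\varepsilon))$. The remaining points are bookkeeping at the same level of informality as the paper's own proof: pad $(T+1)m$ to a power of two (zero blocks keep $A_T$ a contraction); note that the guarantee of \cref{thm:implem} is only on states with ancillas in $\ket{0^\ell}$, so the $T$-fold accumulation is a hybrid/triangle-inequality argument rather than a literal operator-norm application of \cref{lemma:error}; and the final multi-controlled flip onto the output qubit should be decomposed into elementary gates (standard, or via \cref{col:implem}), with the resulting acceptance gap amplified to meet the error-$1/3$ convention, as the paper's preliminaries allow.
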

\begin{proof}
	First, let $Q_v$ and $Q_w$ be the circuits preparing states $v$ and $w$ with error $\varepsilon/8$ in \cref{thm:prep} respectively. Since 
	\[\Big||\bra{0^S}Q_w\ct A^TQ_v\mt\ket{0^S}|^2-|w\ct A^Tv|^2\Big|\leq4\|Q_v\ket{0^S}-v\|_2+4\|Q_w\ket{0^S}-w\|_2\leq\varepsilon/2,\]
	in the rest of the proof we can safely assume that $Q_v\ket{0^S}=v$ and $Q_w\ket{0^S}=w$ while halving $\varepsilon$.

	Intuitively, our algorithm iteratively applies the unitary quantum circuit in \cref{thm:implem} for $T$ times. However, since \cref{thm:implem} only implements $V_A$ instead of $A$, we have to `throw away' the unwanted dimensions introduced by $V_A$, by permuting them into additional dimensions.
	
	Formally, let $\ell=O(\log (T/\varepsilon))$ be the one in \cref{thm:implem} with error parameter $(2T)^{-1}\varepsilon$. The circuit works on three parts of qubits: the counter register $C$ of dimension $2T$, the vector register of dimension $m$, and $\ell$ ancilla qubits. The circuit starts by preparing $\ket{0}_C\mt\otimes v\otimes\ket{0^\ell}$ by applying $Q_v$. Then repeat the following two steps for $T$ times:
	\begin{enumerate}
		\item Apply $V_A$ on the last two qubits of the timer register and the entire vector register by \cref{thm:implem};
		\item Apply the permutation 
		\begin{align*}
			& \ket{0}\rightarrow\ket{0},\quad\ket{2T-2}\rightarrow\ket{1},\quad\ket{2T-1}\rightarrow\ket{2}\\
			& \ket{i}\rightarrow\ket{i+2}\ \forall i=1,\ldots,2T-3.
		\end{align*}
		on the counter register by \cref{thm:perm}.
	\end{enumerate}
	Finally, apply $Q_w\ct$ on the vector register and measure with the projection onto $\ket{0}_C\mt\otimes\ket{0^S}\otimes\ket{0^\ell}$.
	
	To prove the correctness of the algorithm, we first assume that all the implementations in \cref{thm:implem} and \cref{thm:perm} are errorless, i.e. the evolution is completely within the subspace $\cc^{2T}\otimes\cc^m\otimes\ket{0^\ell}$. Then it suffices to notice that $V_A$ is block-diagonal, so that step 1 acts locally on the $T$ subspaces spanned by $\ket{2i}_C\mt$ and $\ket{2i+1}_C\mt$. Therefore after the $i$-th application of $V_A$, the projection of the current state onto $\ket{j}_C\mt$ is always $0$ for $j\geq 2i$, and thus before each application of $V_A$, the projection onto $\ket{1}_C\mt$ is always $0$. So the state after the $i$-th repetition is $\ket{0}_C\mt\otimes (A^iv)+\ket{\bot}$, where $\ket{\bot}$ is orthogonal to $\ket{0}_C\mt$. The output probability is then
	\[\left|\left(\bra{0}_C\mt\otimes\bra{0^S}\right)\left(\id_{2T}\mt\otimes U_w\ct\right)\left(\ket{0}_C\mt\otimes (A^Tv)+\ket{\bot}\right)\right|^2=|w\ct A^Tv|^2.\]
	
	Now each step in the repetition introduces an error of $(2T)^{-1}\varepsilon$. Therefore, by \cref{lemma:error}, the total error of the unitary quantum circuit $W$, compared to the above ideal case, is at most $\varepsilon$.
\end{proof}

\section{Equivalence of Unital and Unitary Quantum Logspace}\label{sect:unital}

\subsection{Simulating Unital Quantum Logspace with Constant Error}

\begin{lemma}\label{thm:unital}
	Given a unital quantum algorithm with time $T$ and space $S=\log m$ specified by the natural representations $K(\Phi_1),\ldots,K(\Phi_T\mt)\in\cc^{m^2\times m^2}$, and an error parameter $\varepsilon>0$, there is a unitary quantum circuit with time $\poly(mT/\varepsilon)$ and space $O(\log(mT/\varepsilon))$, such that if the original unital circuit outputs $0$ with probability $p$, then the unitary circuit outputs $0$ with probability $\sin^2(p+\alpha)$, where $|\alpha|\leq \varepsilon$.
\end{lemma}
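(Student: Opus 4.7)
The plan is to vectorize the density-matrix evolution so that the output probability becomes an inner product against a product of contractions in dimension $m^2$, then build a unitary circuit whose amplitude at $\ket{0^{S'}}$ approximates a suitably normalized version of that inner product, and finally apply Grover-style amplitude amplification to convert the tiny amplitude into a $\sin^2(p+\alpha)$ output probability.

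First I would observe that, since every $\Phi_i$ is unital, each natural representation $K(\Phi_i)\in\cc^{m^2\times m^2}$ is a contraction. Writing $v=\vect(\rho_0)=\ket{0^{2S}}$ and $u=\vect(M_0)$, the output probability is $p=\Tr[M_0\rho_T]=u^\dagger K(\Phi_T)\cdots K(\Phi_1)v$. Since $M_0=\ket{0}\bra{0}\otimes\id_{m/2}$ is a diagonal projector of rank $m/2$, the vector $u$ has norm $\sqrt{m/2}$, so the rescaled $\tilde u=u/\sqrt{m/2}$ is a unit vector and $\tilde u^\dagger K(\Phi_T)\cdots K(\Phi_1)v=p/\sqrt{m/2}$.

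Next I would build a unitary $W_0$ by following the construction in the proof of \cref{thm:power}, except that the single contraction $A$ there is replaced by the sequence $K(\Phi_1),\ldots,K(\Phi_T)$: at step $i$ I apply the block-encoding gadget $V_{K(\Phi_i)}$ from \cref{thm:implem} in the active block of the counter register, followed by the same cyclic counter permutation. The preparation $Q_v$ for $v=\ket{0^{2S}}$ is trivial, and (after qubit reordering) $\tilde u$ is a product of $\log(m/2)$ Bell pairs together with two ancilla zeros, so $Q_{\tilde u}$ is elementary. Uncomputing $\tilde u$ at the end ensures $\beta:=\bra{0^{S'}}W_0\ket{0^{S'}}\approx p/\sqrt{m/2}$, with the approximation error controlled by the precision $\varepsilon'$ chosen for each $V_{K(\Phi_i)}$. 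The total time is $\poly(mT/\varepsilon)$ and the space is $O(\log(mT/\varepsilon))$, with the doubling to dimension $m^2$ absorbed into the log.

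Finally, since the amplitude $\beta$ is tiny, a direct measurement gives only $O(p^2/m)$ probability, so I would apply Grover amplification via $G=-W_0 R_0 W_0^{-1} R_0$ with $R_0=\id-2\ket{0^{S'}}\bra{0^{S'}}$; after $k$ iterations of $G$ following $W_0$, the amplitude on $\ket{0^{S'}}$ becomes $\sin((2k+1)\theta)$ where $\theta=\arcsin(|\beta|)\approx p/\sqrt{m/2}$. Choosing $2k+1$ to be the nearest odd integer to $\sqrt{m/2}$ gives $(2k+1)\theta = p + O(1/\sqrt{m})$; if this is larger than $\varepsilon$, I pad by tensoring each $\Phi_i$ with single-qubit identity channels to boost the effective dimension until the normalization exceeds $1/\varepsilon^2$, which costs only $O(\log(1/\varepsilon))$ additional space. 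A single flag ancilla, flipped by a generalized Toffoli iff the working register is not $\ket{0^{S'}}$, turns the success of the amplification into a first-qubit measurement with output-$0$ probability $\sin^2((2k+1)\theta)=\sin^2(p+\alpha)$. The main obstacle is bookkeeping the error: the $T$ approximate block-encodings compound via \cref{lemma:error}, and this error, together with the small imaginary part of $\beta$, is further amplified by the $O(\sqrt{m}+1/\varepsilon)$ Grover iterations, so $\varepsilon'$ must be taken inversely polynomial in $T$, $\sqrt{m}$, and $\varepsilon$ to keep all contributions to $\alpha$ below $\varepsilon$ simultaneously.
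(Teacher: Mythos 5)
Your proposal is correct and takes essentially the same route as the paper: vectorize the unital evolution into a product of the contractions $K(\Phi_i)$, realize it unitarily via the block-encoding/counter construction behind \cref{thm:power} so that the amplitude on $\ket{0^{S'}}$ is approximately $\sqrt{2/m}\,p$, and then apply about $\sqrt{m/8}$ Grover-type reflections so that the final measurement outputs $0$ with probability $\sin^2(p+\alpha)$. Your minor deviations (inlining iterative multiplication rather than powering the block-cyclic matrix $A$, preparing $\vect(M_0)$ as Bell pairs instead of invoking \cref{thm:prep}, and rounding the iteration count with $m$ padded to roughly $\varepsilon^{-2}$ instead of making $\sqrt{m/8}$ an exact integer with $m\geq 4\varepsilon^{-1}$) do not change the argument.
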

\begin{proof}
	We can always assume that $S$ is an odd number and $m\geq \max(4\varepsilon^{-1},8)$ by adding dummy dimensions. As each $K(\Phi_i)$ is a contraction, the following matrix $A$ of dimension $m^2T$ is also a contraction:
	\[A=\begin{pmatrix}
		 & & & & K(\Phi_T\mt) \\
		K(\Phi_1) & & & & \\
		& K(\Phi_2) & & & \\
		& & \ddots & & \\
		& & & K(\Phi_{T-1}\mt) &
	\end{pmatrix}\]
	Since the final state of the unital quantum algorithm is $\rho_T\mt=\Phi_T\mt\circ\cdots\circ \Phi_1\mt(\rho_0\mt)$, we can rewrite the output probability of the unital quantum algorithm as
	\begin{align*}
		p=\Tr[\rho_T\mt M_0] & = \vect(M_0)\ct\vect(\rho_T\mt)=\vect(M_0)\ct K(\Phi_T\mt)\cdots K(\Phi_1\mt)\vect(\rho_0\mt) \\
		& = \left(\vect(M_0)\ct\otimes\bra{0}\right) A^T \big(\vect(\rho_0\mt)\otimes\ket{0}\big)
	\end{align*}

	Let $v=\vect(\rho_0\mt)\otimes\ket{0}$ which is already a unit vector. Since $\|\vect(M_0)\|_2=\sqrt{m/2}$, let $w=\sqrt{\frac{2}{m}}\vect(M_0)\otimes\ket{0}$. Let $\varepsilon_1$ be the error parameter to be determined later. \cref{thm:power} constructs a unitary quantum algorithm $W$ with time $\poly(mT/\varepsilon_1)$ and space $S'=O(\log(mT/\varepsilon_1))$, such that $|\bra{0^{S'}}W\ket{0^{S'}}|^2$ is $\varepsilon_1$-close to $2p^2/m$. Therefore $|\bra{0^{S'}}W\ket{0^{S'}}|$ is $\sqrt{\varepsilon_1}$-close to $\sqrt{\frac{2}{m}}p$.

	Let
	\[R=\left(\mathbf{I}_{S'}-2W\ket{0^{S'}}\bra{0^{S'}}W\ct\right)
	\left(\mathbf{I}_{S'}-2\ket{0^{S'}}\bra{0^{S'}}\right)\]
	be the rotation on the subspace spanned by $\ket{0^{S'}}$ and $W\ket{0^{S'}}$, of degree $2\cos^{-1}|\bra{0^{S'}}W\ket{0^{S'}}|$. By the estimation
	\[\frac{\pi}{2}-x-\frac{1}{4}x^3\leq\cos^{-1}x\leq\frac{\pi}{2}-x,\forall x\in[0,1],\]
	and since $(x+y)^3\leq4(x^3+y^3)$ for non-negative $x,y$, it can be calculated that the degree of the rotation $R$ is in the range
	\[\left[\pi-2\sqrt{\frac{2}{m}}p-4\sqrt{\varepsilon_1}-\frac{4}{m}\sqrt{\frac{2}{m}},\ \pi-2\sqrt{\frac{2}{m}}p+2\sqrt{\varepsilon_1}\right].\]
	
	Since $S$ is an odd number, $k=\sqrt{m/8}$ is an integer. Applying $R$ for $k$ times will rotate $\ket{0^{S'}}$ by a degree of $k\pi-p-\alpha$, where $|\alpha|\leq\sqrt{2m\varepsilon_1}+2m^{-1}$. Therefore the projective measurement of the state $R^k\ket{0^{S'}}$ onto the subspace orthogonal to $\ket{0^{S'}}$ outputs $0$ with probability $\sin^2(p+\alpha)$. Let $\varepsilon_1=(8m)^{-1}\varepsilon^2$, and notice that $2m^{-1}\leq\varepsilon/2$, so that we have $|\alpha|<\varepsilon$, and the circuit $R^k$ is unitary with time $\poly(mT/\varepsilon)$ and space $O(\log(mT/\varepsilon))$.
\end{proof}

\begin{theorem}\label{thm:main}
	$\mathsf{BQ_QL}=\mathsf{BQ_UL}$, and $\mathsf{promiseBQ_QL}=\mathsf{promiseBQ_UL}$.
\end{theorem}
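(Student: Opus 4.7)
The plan is to prove the two inclusions separately. The inclusion $\mathsf{BQ_UL}\subseteq\mathsf{BQ_QL}$ (and its promise version) is immediate from the definitions, since any unitary quantum logspace algorithm is a special case of a quantum logspace algorithm that happens not to use intermediate measurements. So the entire work is in showing $\mathsf{BQ_QL}\subseteq\mathsf{BQ_UL}$, which I would attack by reducing $\mathsf{BQ_QL}$ to the unital setting and then invoking \cref{thm:unital}.

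Concretely, start with a $\mathsf{BQ_QL}$ algorithm deciding $f_n$ with error $1/3$. First, as noted in the ``Other specifications'' paragraph, amplify the error to an arbitrarily small constant $\delta>0$; this remains a $\mathsf{BQ_QL}$ algorithm of logarithmic space and polynomial time. Next, I would reinterpret this algorithm as a \emph{unital} quantum algorithm in the sense of \cref{sect:def}: each step $\Phi_i$ is either a gate from $\mathcal{G}_S$ (which is a unitary channel, hence unital) or a single-qubit computational basis measurement from $\mathcal{M}_S$ whose outcome is discarded (since the model has no classical memory). The latter is the dephasing channel $\rho\mapsto\sum_b\ket{b}\bra{b}\rho\ket{b}\bra{b}$ on the measured qubit tensored with the identity on the rest, which is manifestly unital. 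Moreover, the natural representation $K(\Phi_i)$ is trivially computable from $\ud(x,i)\in\mathcal{G}_S\cup\mathcal{M}_S$ in deterministic logspace, so the unital-algorithm hypothesis of \cref{thm:unital} is satisfied.

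Now invoke \cref{thm:unital} with a small constant error parameter $\varepsilon>0$. With $m=2^{O(\log n)}=\poly(n)$ and $T=\poly(n)$, this produces a unitary circuit of time $\poly(n)$ and space $O(\log n)$ whose output-$0$ probability equals $\sin^2(p+\alpha)$, where $p$ is the acceptance probability of the (amplified) $\mathsf{BQ_QL}$ algorithm and $|\alpha|\leq\varepsilon$. Since $\sin^2$ is strictly increasing on $[0,1]\subset[0,\pi/2]$, the gap survives: if $p\geq 1-\delta$ the new probability is at least $\sin^2(1-\delta-\varepsilon)$, and if $p\leq\delta$ it is at most $\sin^2(\delta+\varepsilon)$. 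Using $\sin^2(1)\approx 0.708>2/3$ and $\sin^2(0)=0<1/3$, I can fix $\delta$ and $\varepsilon$ sufficiently small so that the former exceeds $2/3$ and the latter is below $1/3$, giving a $\mathsf{BQ_UL}$ algorithm with error strictly below $1/3$. The promise-class version $\mathsf{promiseBQ_QL}=\mathsf{promiseBQ_UL}$ follows by exactly the same argument applied to partial functions.

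The main obstacle is already absorbed into \cref{thm:unital}; what remains here is essentially bookkeeping plus the observation that $\sin^2$ is monotone on the relevant interval so that a boosted constant gap in $p$ translates into a boosted constant gap in $\sin^2(p+\alpha)$. The only place where one has to be a little careful is in the order of operations: one must amplify the $\mathsf{BQ_QL}$ algorithm \emph{before} applying \cref{thm:unital}, because the $\sin^2$ transformation shrinks the gap, and the amplification step cannot be performed cleanly on the unitary side (which is precisely the obstruction this paper is designed to overcome).
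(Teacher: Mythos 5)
Your proof is correct and follows essentially the same route as the paper: observe the trivial inclusion, view the $\mathsf{BQ_QL}$ circuit (gates plus outcome-discarding measurements, i.e.\ dephasing channels) as a unital algorithm, invoke \cref{thm:unital}, and absorb the $\sin^2$ distortion by constant-factor amplification. The only difference is bookkeeping — the paper amplifies \emph{after} the unitary simulation (which, contrary to your closing parenthetical, is perfectly fine: constant-power amplification is available for unitary algorithms via the deferred-measurement remark in Section~\ref{sect:def}, and that is exactly what the paper does), whereas you amplify beforehand in $\mathsf{BQ_QL}$; both orderings work.
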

\begin{proof}
	Clearly $\mathsf{BQ_QL}\supseteq\mathsf{BQ_UL}$, and $\mathsf{promiseBQ_QL}\supseteq\mathsf{promiseBQ_UL}$. To prove the other direction, notice that quantum circuits are unital, therefore by \cref{thm:unital} with $\varepsilon=0.01$ they can be simulated by unitary quantum circuits with polynomial time and logarithmic space. Since the original output probability $p$ is promised to be in $[0,1/3]$ or $[2/3,1]$, the value of $\sin^2(p+\alpha)$ is in $[0,0.12]$ or $[0.37,1]$ respectively, and thus it suffices to perform a constant rounds of amplification in order to bring the error down to less than $1/3$.
\end{proof}

\begin{remark}
	Though we proved \cref{thm:main} via the contraction powering algorithm, the unitary quantum circuit that simulates a given quantum circuit with intermediate measurements can be more simply constructed without using \cref{thm:implem}. In details, given a channel $\Phi$ in the quantum circuit, we can directly write out the natural representations $K(\Phi)$, and apply the matrix on the vectorized density matrix $\vect(\rho)$:
	\begin{itemize}
		\item[-] If $\Phi$ is a unitary quantum gate $U$, then $K(\Phi)=\overline{U}\otimes U$ which can be implemented by applying $U$ and then $\overline{U}$;
		\item[-] If $\Phi$ is a single-qubit measurement, then $K(\Phi)$ is a diagonal matrix with diagonal entries in $\{0,1\}$. It can be implemented using a similar "permute and throw away" technique as in \cref{thm:power}, which after applied $T$ times increases the dimension (instead of the space!) by a factor of $T$.
	\end{itemize}
	And the resulting circuit can be amplified in the same way as in \cref{thm:unital}.
\end{remark}

\subsection{Simulating Unital Quantum Logspace with Small Error}
Now we can improve the result in \cref{thm:unital} to arbitrarily small error (namely the probability of outputting $0$ is $(p+\alpha)$ instead of $\sin^2(p+\alpha)$). Interestingly, the improvement relies on a stronger version of \cref{thm:power}, which in turn relies on \cref{thm:main}. In a way, we use these results to improve themselves!

We start with the stronger version of \cref{thm:power}, which outputs the numerical value of $|w\ct A^Tv|^2$ instead of outputting $0$ with such probability. Here the quantum circuit outputs a number by a final measurement over the computational basis.

\begin{lemma}\label{lemma:power}
	Given $m=2^S$, a contraction $A\in\cc^{m\times m}$, a positive integer $T$, two unit vectors $v,w\in\cc^m$ and an error parameter $\varepsilon>0$, there is a unitary quantum circuit with time $\poly(mT/\varepsilon)$ and space $O(\log(mT/\varepsilon))$ such that with probability $1-2^{-\poly(mT/\varepsilon)}$, it outputs $|w\ct A^Tv|^2$ with additive error $\varepsilon$. 
\end{lemma}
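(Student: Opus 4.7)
The plan is to reduce the task to estimating the acceptance probability of a unitary circuit (obtained via \cref{thm:power}), carry out that estimation by a unital counting algorithm, and convert the counting algorithm into a unitary circuit via \cref{thm:main}, using a threshold decomposition to handle the multi-bit output.

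First, I would invoke \cref{thm:power} with error $\varepsilon/4$ to produce a unitary $W$ on $S'=O(\log(mT/\varepsilon))$ qubits with $q:=|\bra{0^{S'}}W\ket{0^{S'}}|^2$ satisfying $|q-p|\leq\varepsilon/4$. It then suffices to output an $(\varepsilon/2)$-accurate estimate of $q$ with failure probability $2^{-\poly(mT/\varepsilon)}$. Next, I would design the natural BQQL estimator $\mathcal{A}$: for $N=\poly(mT/\varepsilon)$ iterations, apply $W$ to a fresh $\ket{0^{S'}}$ workspace, measure the workspace in the computational basis, increment an $O(\log N)$-bit counter whenever the outcome is $0^{S'}$, and reset the workspace via $X$-gates controlled on the measurement outcomes. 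Output the counter divided by $N$, rounded to a multiple of $\varepsilon/4$. By \cref{lemma:ch}, for $N$ large enough the output is within $\varepsilon/4$ of $q$ (hence $\varepsilon/2$ of $p$) with probability $1-2^{-\poly(mT/\varepsilon)}$; the algorithm uses time $\poly(mT/\varepsilon)$ and space $O(\log(mT/\varepsilon))$.

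To convert $\mathcal{A}$ into a BQUL circuit, since \cref{thm:main} is stated for single-bit decisions, I use a threshold decomposition. For each $k\in\{1,\ldots,K\}$ with $K=\lceil 4/\varepsilon\rceil$, let $\mathcal{P}_k$ be the promise problem of deciding $q\geq k\varepsilon/4+\varepsilon/8$ versus $q\leq k\varepsilon/4-\varepsilon/8$. The counting procedure above solves each $\mathcal{P}_k$ in $\mathsf{promiseBQ_QL}$, hence in $\mathsf{promiseBQ_UL}$ by \cref{thm:main}. Running the $K$ BQUL subroutines sequentially, summing their output bits into an $O(\log K)$-bit register, and scaling by $\varepsilon/4$ gives an estimate of $q$ within $\varepsilon/2$, since the thresholds are spaced $\varepsilon/4$ apart while the buffer zones have width $\varepsilon/4$, so at most one threshold lies in the promise-violating zone around any $q$.

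The main obstacle is amplifying the BQUL error. \cref{thm:main} only yields BQUL with constant error $1/3$, but a union bound over $K=\poly(1/\varepsilon)$ decisions needs per-decision failure $2^{-\poly(mT/\varepsilon)}$, while naive parallel amplification would use polynomial space. I plan to resolve this by performing the amplification inside an outer BQQL wrapper: each BQUL subroutine is invoked $\poly(mT/\varepsilon)$ times in sequence, using an intermediate measurement after every run both to reset the workspace in logspace and to update a majority counter. This composite BQQL algorithm has space $O(\log(mT/\varepsilon))$ and produces each threshold bit correctly with probability $1-2^{-\poly(mT/\varepsilon)}$; a second use of \cref{thm:main}, invoked through the natural-representation construction described in the remark following the theorem rather than its bare bit-by-bit statement, then eliminates the wrapper's measurements and yields the desired unitary circuit. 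This twofold self-referential application of \cref{thm:main}, once to obtain the threshold deciders and once to strip the outer wrapper, realizes precisely the bootstrap mentioned in the discussion preceding the lemma.
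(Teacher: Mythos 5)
Your reduction to threshold problems is a legitimate alternative to the paper's route (the paper uses Marriott--Watrous amplification of $W$ to get a multi-bit estimate $\tilde p$, then extracts its bits one at a time, most-significant first, via \cref{thm:main}; your sampling-counter estimator plus the $K=\lceil 4/\varepsilon\rceil$ threshold deciders with staggered promise zones plays the same role and handles the boundary ambiguity in essentially the same spirit). The genuine gap is in your final amplification step. Wrapping a constant-error $\mathsf{BQ_UL}$ decider in a $\mathsf{BQ_QL}$ majority-vote loop does give a measured algorithm with failure probability $2^{-\poly(mT/\varepsilon)}$, but applying \cref{thm:main} (or the natural-representation construction in the remark) to strip the wrapper's measurements does \emph{not} preserve that error: the unitary simulation of \cref{thm:unital} outputs $0$ with probability $\sin^2(p+\alpha)$, so even when the wrapped algorithm accepts with probability $p=1-2^{-\poly}$ the simulating unitary circuit accepts with probability about $\sin^2(1)\approx 0.71$. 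After the constant-round amplification built into \cref{thm:main} you are back to constant error per threshold, and you are in a circle: getting from constant to exponentially (or even inverse-polynomially-beyond-$\log$) small error for a \emph{unitary} logspace circuit is exactly the problem you started with, since without intermediate measurements you cannot reset the workspace between repetitions, and coherent majority over $\poly$ fresh copies costs polynomial space.

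This is why the paper, after invoking \cref{thm:main}, additionally uses the space-efficient error reduction for unitary quantum computations of Fefferman and Lin, $\mathsf{BQ_UL}=\mathsf{Q_UL}(1-2^{-\poly(n)},2^{-\poly(n)})$ \cite{fefferman2016space}; that result (or an equivalent) is the missing ingredient your argument needs to make the per-threshold failure probability $2^{-\poly(mT/\varepsilon)}$ before you coherently sum the $K$ answer bits. With only your wrapper trick you could get constant per-threshold error, or $1/\poly$ error by $O(\log(mT/\varepsilon))$ coherent repetitions with uncomputation (as the paper notes after \cref{col:power}), but not the $1-2^{-\poly(mT/\varepsilon)}$ success probability claimed in the lemma. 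If you replace the second use of \cref{thm:main} by an appeal to \cite{fefferman2016space} applied to each threshold decider, the rest of your argument goes through.
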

\begin{proof}[Proof]
	\cref{thm:power} provides a unitary quantum circuit $W$ with time $\poly(mT/\varepsilon)$ and space $O(\log(mT/\varepsilon))$ which outputs $0$ with probability $p$ such that $\big|p-|w\ct A^Tv|^2\big|\leq\varepsilon/2$. By Marriott-Watrous amplification \cite[Theorem 3.3]{marriott2005quantum}, there is a quantum circuit $W'$ with time $\poly(mT/\varepsilon)$ and space $O(\log(mT/\varepsilon))$ \textit{with} intermediate measurements, that uses $W$ and $W^{-1}$ as sub-circuits, and with probability $1-\delta=1-2^{-\poly(mT/\varepsilon)}$ outputs a value $\tilde{p}$ such that $|\tilde{p}-p|\leq \varepsilon/4$.
	
	Since the resulting circuit $W'$ is not unitary, we would like to use \cref{thm:main} to compute unitarily each bit in the output value $\tilde{p}$ of $W'$. Furthermore, using the result in \cite{fefferman2016space} that $\mathsf{BQ_UL}=\mathsf{Q_UL}(1-2^{-\poly(n)},2^{-\poly(n)})$ (which stands for unitary quantum logspace with exponentially small error) the total error probability can be reduced down to $2^{-\poly(mT/\varepsilon)}$. Assuming that every bit in $\tilde{p}$ is $0$ with probability either in $[0,1/3]$ or $[2/3,1]$, then for $1\leq i\leq\lceil\log(1/\varepsilon)\rceil+2$, we let $W_i$ be the unitary quantum circuit that computes the $i$-th bit of $\tilde{p}$ with exponentially small error. Ideally, the outputs of $W_i$ combined together would $\varepsilon$-approximate $|w\ct A^Tv|^2$.
	
	However, the value $\tilde{p}$ outputted by the Marriott-Watrous amplification might be different in each $W_i$, so the final approximation assembled can be totally wrong (for instance, when $p=0.5$, the outputs $\tilde{p}=\overline{0.1000\ldots}$ and $\tilde{p}=\overline{0.0111\ldots}$ might be assembled to $\overline{0.1111\ldots}$). Moreover, the error reduction in \cite{fefferman2016space} may have unpredictable results, as the promises on the distributions of the bits in $\tilde{p}$ are not guaranteed (again when $p=0.5$, the most significant bit of $\tilde{p}$ is equally distributed on $0$ and $1$).
	
	Fortunately, we can solves both problems by computing from the most significant bit to the least significant bit. We maintain a value $q\in[0,1]$ which is initialized to $0$. For each $i=1$ to $\lceil\log(1/\varepsilon)\rceil+2$ do the following: Run the modified circuit $W_i$ which outputs the $i$-th bit of $(\tilde{p}-q)$ instead of $\tilde{p}$. To deal with case when $\tilde{p}-q$ is outside of $[0,2^{-i+1})$, if $\tilde{p}-q<0$ it outputs $0$, and if $\tilde{p}-q\geq 2^{-i+1}$ it outputs $1$. Let the output bit be $b_i$ and update $q$ to $q+b_i\cdot 2^{-i}$.
	
	We claim that with probability $1-2^{-\poly(mT/\varepsilon)}$, $|q-p|\leq\varepsilon/2$. First notice that, if every bit in $\tilde{p}$ is $0$ with probability in $[0,2\delta]\cup[1-2\delta,1]$, then the error reduction will work as intended, while with probability $1-O(\delta\log(1/\varepsilon))=1-2^{-\poly(mT/\varepsilon)}$ the value $\tilde{p}$ is the same in each circuit $W_i$, so that $q$ is also the same as $\tilde{p}$.
	
	Now let $i$ be the first index such that the $i$-th bit of $\tilde{p}$ is $0$ with probability in $[2\delta,1-2\delta]$. As the Marriott-Watrous amplification outputs incorrectly with probability at most $\delta$, it means that there are two valid outputs $\tilde{p}_1$ and $\tilde{p}_2$, both are $\varepsilon/4$-close to $p$, and they coincide in the first $i-1$ bits but differs at the $i$-th bit. Let $q_i$ be the value of $q$ at that step, which consists of the first $i-1$ bits of $\tilde{p}_1$ and $\tilde{p}_2$, then $|q_i+2^{-i}-p|\leq\varepsilon/4$. Therefore the remaining bits of $q$ could only be $\overline{011\ldots11}$, $\overline{100\ldots00}$ or $\overline{100\ldots01}$, which means $|q_i+2^{-i}-q|\leq\varepsilon/4$ and thus $|q-p|\leq\varepsilon/2$. Notice that on the $i$-th (and the last bit when $b_i=1$) the error reduction may fail and arbitrarily output $0$ or $1$, but it does not matter as both $0$ and $1$ are viable in these cases.
	
	As a conclusion, the value $q$ is an $\varepsilon$-approximation of $|w\ct A^Tv|^2$ with probability $1-2^{-\poly(mT/\varepsilon)}$. The above circuit that outputs $q$ is clearly with time $\poly(mT/\varepsilon)$ and space $O(\log(mT/\varepsilon))$ as we use constructions in \cref{thm:main} and \cite{fefferman2016space}. Finally, the the circuit is unitary since the $O(\log(1/\varepsilon))$ measurements that output $b_i$'s can be deferred, and each $W_i$ can be uncomputed by implementing the circuit in reverse.
\end{proof}

\begin{corollary}\label{col:power}
	Given $m=2^S$, a contraction $A\in\cc^{m\times m}$, a positive integer $T$, two unit vectors $v,w\in\cc^m$ and an error parameter $\varepsilon>0$, there is a unitary quantum circuit with time $\poly(mT/\varepsilon)$ and space $O(\log(mT/\varepsilon))$ such that with probability $1-2^{-\poly(mT/\varepsilon)}$, it outputs $w\ct A^Tv$ with additive error $\varepsilon$. 
\end{corollary}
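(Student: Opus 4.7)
The plan is to write $w\ct A^Tv=x+iy$ and recover $x$ and $y$ separately with additive error $\varepsilon/2$ each. The central observation is that the unitary $W$ built in the proof of \cref{thm:power} gives more than its statement advertises: its amplitude $\bra{0^{S'}}W\ket{0^{S'}}$, not just the modulus, is close to $w\ct A^Tv$, because in the errorless case the analysis there shows that the projection of the final state onto $\ket{0}_C\otimes\ket{0^S}\otimes\ket{0^\ell}$ equals $w\ct A^Tv$ exactly. So I would first revisit that proof to extract the strengthened statement that $|\bra{0^{S'}}W\ket{0^{S'}}-w\ct A^Tv|\leq\varepsilon/8$, which follows by the same error propagation via \cref{lemma:error}.

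Second, I would combine $W$ with a Hadamard test to turn the computation of $\mathrm{Re}$ and $\mathrm{Im}$ into probability-estimation tasks. Concretely, prepare an ancilla qubit in $\ket{+}$, apply a controlled-$W$ on the data register initialized to $\ket{0^{S'}}$, then Hadamard the ancilla. The acceptance probability is $p_R=\tfrac12\bigl(1+\mathrm{Re}\bra{0^{S'}}W\ket{0^{S'}}\bigr)$. Inserting an $S\ct$ on the ancilla before the controlled-$W$ yields a twin circuit with acceptance probability $p_I=\tfrac12\bigl(1+\mathrm{Im}\bra{0^{S'}}W\ket{0^{S'}}\bigr)$, up to a fixed sign. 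Both circuits are unitary and run in time $\poly(mT/\varepsilon)$ and space $O(\log(mT/\varepsilon))$, with only a constant-factor overhead over $W$.

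Third, I would estimate $p_R$ and $p_I$ each with additive error $\varepsilon/8$ using verbatim the strategy in the proof of \cref{lemma:power}: run Marriott--Watrous amplification on the twin circuits to produce a bit-by-bit estimate with exponentially small failure probability, then invoke \cref{thm:main} together with the error reduction $\mathsf{BQ_UL}=\mathsf{Q_UL}(1-2^{-\poly},2^{-\poly})$ of \cite{fefferman2016space} to compute each bit unitarily, using the same ``estimate the most significant bits first, relative to previously committed bits'' trick to sidestep boundary-case non-uniqueness. That argument never uses the specific form $|w\ct A^Tv|^2$; it only uses that the quantity to be estimated is the acceptance probability of a given unitary circuit, so it ports directly. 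Finally, output $(2p_R-1)+i(2p_I-1)$ via deferred measurements on the recorded bits, recovering $w\ct A^Tv$ within additive error $\varepsilon$.

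The step I expect to be the main obstacle is the first one: making the amplitude-level approximation precise and checking that the phase information survives every layer of the construction --- the block-encoding of \cref{thm:implem}, the ``permute and throw away'' reduction from $V_A$ to $A$ in \cref{thm:power}, and the preparations of $v$ and $w$ via \cref{thm:prep} --- without accidentally conjugating or introducing an unwanted global phase that would silently corrupt the real/imaginary decomposition. Once that bookkeeping is in place, the Hadamard test and the probability-estimation step are direct invocations of already-developed machinery, so the overall resource bounds $\poly(mT/\varepsilon)$ time and $O(\log(mT/\varepsilon))$ space follow immediately.
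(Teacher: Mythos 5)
Your proposal is correct in outline, but it takes a genuinely different route from the paper. The paper's proof is a two-line polarization trick: it augments the matrix to $A_1=\diag(A,1)$ and the vectors to $v_1=\frac{1}{\sqrt2}\begin{pmatrix}v\\1\end{pmatrix}$, $v_1'=\frac{1}{\sqrt2}\begin{pmatrix}v\\\mathrm{i}\end{pmatrix}$, $w_1=\frac{1}{\sqrt2}\begin{pmatrix}w\\1\end{pmatrix}$, so that $w_1\ct A_1^Tv_1=\frac12(w\ct A^Tv+1)$ and $w_1\ct A_1^Tv_1'=\frac12(w\ct A^Tv+\mathrm{i})$; then $\mathrm{Re}(w\ct A^Tv)$ and $\mathrm{Im}(w\ct A^Tv)$ are recovered from the identity $w\ct A^Tv=\frac12\left(4|w_1\ct A_1^Tv_1|^2-|w\ct A^Tv|^2-1\right)+\frac{\mathrm{i}}{2}\left(4|w_1\ct A_1^Tv_1'|^2-|w\ct A^Tv|^2-1\right)$, i.e.\ by three black-box invocations of \cref{lemma:power} on contractions. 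This completely sidesteps the two pieces of extra work your route requires: (i) strengthening \cref{thm:power} from a statement about $|\bra{0^{S'}}W\ket{0^{S'}}|^2$ to an amplitude-level statement $|\bra{0^{S'}}W\ket{0^{S'}}-w\ct A^Tv|\leq\varepsilon$ --- this does go through, since the guarantees of \cref{thm:implem}, \cref{thm:prep} and the cited gate-synthesis results are all stated as operator-norm (or vector-norm) bounds rather than bounds up to global phase, but it is exactly the bookkeeping you flag as the main obstacle; and (ii) building controlled-$W$ for the Hadamard test, which means controlling every gate of $W$ (again doable in logspace via the same synthesis results, since after controlling, any phase slack would become a relative phase and must be excluded, which the operator-norm guarantees do). In exchange, your approach is more self-contained at the conceptual level (Hadamard test plus the probability-estimation machinery of \cref{lemma:power}, which indeed only uses that the quantity is an acceptance probability of a logspace-constructible unitary circuit, so it ports as you say), whereas the paper's approach buys simplicity: no phase analysis, no controlled circuits, and \cref{lemma:power} is reused verbatim as a statement rather than reopened as a proof. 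Both yield the claimed $\poly(mT/\varepsilon)$ time and $O(\log(mT/\varepsilon))$ space bounds.
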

\begin{proof}
	Let	$A_1=\begin{pmatrix}A& \\ & 1\end{pmatrix}$, 
	$v_1=\begin{pmatrix}v/\sqrt{2}\\1/\sqrt{2}\end{pmatrix}$, 
	$v_1'=\begin{pmatrix}v/\sqrt{2}\\\mathrm{i}/\sqrt{2}\end{pmatrix}$ and
	$w_1=\begin{pmatrix}w/\sqrt{2}\\1/\sqrt{2}\end{pmatrix}$. Since we have
	\[w\ct A^Tv=\frac{1}{2}\left(4|w_1\ct A_1^Tv_1|^2-|w\ct A^Tv|^2-1\right)+\frac{\mathrm{i}}{2}\left(4|w_1\ct A_1^Tv_1'|^2-|w\ct A^Tv|^2-1\right),\]
	computing $|w\ct A^Tv|^2$, $|w_1\ct A_1^Tv_1|^2$ and $|w_1\ct A_1^Tv_1'|^2$ each up to error $\varepsilon/2$ gives $w\ct A^Tv$ with error $\varepsilon$.
\end{proof}

Notice that one can instead achieve $1/\poly(mT/\varepsilon)$ error probability without using the exponential error reduction in \cite{fefferman2016space}, by simply repeating the decision circuit in $\mathsf{BQ_UL}$ for $O(\log(mT/\varepsilon))$ rounds. Nevertheless, it is enough for proving the following theorem, which states that unitary quantum circuits can simulate any unital quantum algorithm by computing its output distribution with arbitrarily small error.

\begin{theorem}\label{thm:unitals}
	Given a unital quantum algorithm with time $T$ and space $S=\log m$ specified by the natural representations $K(\Phi_1),\ldots,K(\Phi_T\mt)\in\cc^{m^2\times m^2}$, where
	$\rho_T\mt=\Phi_T\mt\circ \Phi_{T-1}\mt\circ\cdots\circ \Phi_1(\ket{0^S}\bra{0^S})$
	is its final state, a multi-outcome measurement $\{M_0,\ldots,M_{r-1}\}$ over the computational basis, and an error parameter $\varepsilon>0$, there is a unitary quantum circuit $W$ with time $\poly(mT/\varepsilon)$ and space $S'=O(\log(mT/\varepsilon))$ such that if $w\in\cc^{2^{S'}}$ is the vector representation of $W\ket{0^{S'}}$ in computational basis, for every $j\in[r]$ it holds that $\big||w_j|^2-\Tr[\rho_TM_j\mt]\big|\leq\varepsilon$.
\end{theorem}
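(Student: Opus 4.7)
The plan is to combine the contraction-powering algorithm of Corollary~\ref{col:power} with the state-preparation procedure of Lemma~\ref{thm:prep} to coherently prepare a pure state whose squared amplitudes encode the measurement probabilities. First I reduce each probability $p_j := \Tr[\rho_T\mt M_j\mt]$ to an inner product of the form $\hat{w}_j\ct A^T v$, where $A \in \cc^{m^2T \times m^2 T}$ is the block contraction built from the natural representations $K(\Phi_1),\ldots,K(\Phi_T\mt)$ exactly as in the proof of Lemma~\ref{thm:unital}, $v = \vect(\ket{0^S}\bra{0^S}) \otimes \ket{0}$, and $\hat{w}_j$ is the unit vector obtained by normalizing $\vect(M_j) \otimes \ket{0}$. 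Since each $M_j$ is a projection in the computational basis, $\|\vect(M_j)\|_2 = \sqrt{\Tr(M_j\mt)}$ is an easily computable integer, so $p_j$ is recovered from the inner product by a simple scalar multiplication. The same reduction applied to $\sum_{j \geq i} M_j$ produces the tail sum $T_i := \sum_{j \geq i} p_j$.

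Next I mimic the proof of Lemma~\ref{thm:prep} to prepare the target state $\phi = \sum_{j=0}^{r-1} \sqrt{p_j}\,\ket{j}$ (padded with zeros up to dimension $2^{S'}$), which satisfies $|\phi_j|^2 = p_j$. As in that proof, $\phi$ is obtained by applying $r-1$ two-level unitaries $U_{a_i}$ on dimensions $i$ and $i+1$, with $a_i = \sqrt{p_i / T_i}$. Following the remark after Lemma~\ref{thm:prep}, I compute these parameters on the fly: at step $i$ I invoke Corollary~\ref{col:power} coherently to write approximations $\tilde{p}_i$ of $p_i$ and $\tilde{T}_i$ of $T_i$ (the latter either via a direct invocation on $\sum_{j\geq i}M_j$, or by maintaining the running total $\tilde{T}_{i+1} = \tilde{T}_i - \tilde{p}_i$ starting from $\tilde{T}_0 = 1$) into an ancilla register, compute $\tilde{a}_i = \sqrt{\tilde{p}_i/\tilde{T}_i}$ by quantum arithmetic (capped at $1$ to handle edge cases where $\tilde{T}_i$ is near zero), synthesize the rotation $U_{\tilde{a}_i}$ up to exponentially small error via the efficient Solovay--Kitaev construction of \cite{melkebeek2012time}, apply it on dimensions $i$ and $i+1$, and then uncompute the ancillas by running Corollary~\ref{col:power} in reverse.

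The main obstacle is controlling the accumulation of error and of failure probabilities across the $r \leq 2^{S'}$ iterations. By setting the per-invocation error of Corollary~\ref{col:power} to $\varepsilon/\poly(mT/\varepsilon)$ with failure probability $2^{-\poly(mT/\varepsilon)}$, each $\tilde{a}_i$ is close enough to $a_i$ that $\|U_{\tilde{a}_i} - U_{a_i}\|$ is small, and Lemma~\ref{lemma:error} bounds the cumulative spectral distance between the actual circuit and the ideal preparation of $\phi$ by $\varepsilon$. The rare failure events of Corollary~\ref{col:power} are absorbed into the same $\varepsilon$ via a union bound. Uncomputing all ancillas keeps the whole circuit unitary, and the total resources remain $\poly(mT/\varepsilon)$ time and $O(\log(mT/\varepsilon))$ space, matching the requirements of the theorem. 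In this way the result \emph{uses itself}: Corollary~\ref{col:power}, built on Theorem~\ref{thm:main}, is exactly what enables arbitrarily small error in the simulation of unital logspace, thereby strengthening Lemma~\ref{thm:unital}.
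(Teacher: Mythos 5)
Your proposal is correct and follows essentially the same route as the paper: reduce each $\Tr[\rho_T\mt M_j\mt]$ to a contraction-powering instance with the block matrix from \cref{thm:unital}, run the circuit of \cref{col:power} coherently (and in reverse, to uncompute) so that its output register controls the two-level rotations of the \cref{thm:prep} preparation circuit, maintaining the tail sums on the fly as in the remark following that lemma. The only differences are cosmetic (you compute $a_i=\sqrt{\tilde p_i/\tilde T_i}$ by quantum arithmetic rather than controlling directly on approximations of $\sqrt{p_j}$), and your error/failure accounting matches the paper's.
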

\begin{proof}
	
	For every $j\in[r]$, let $m_j$ be the dimension of the subspace that $M_j$ projects onto. In other words, $m_j=\|\vect(M_j)\|_2^2$. As in the proof of \cref{thm:unital}, we can construct a contraction $A\in\cc^{m^2T\times m^2T}$ and unit vectors $v,w\in\cc^{m^2T}$ such that $w\ct A^Tv=\Tr[\rho_TM_j\mt]/\sqrt{m_j}$. By \cref{col:power}, for every $j\in[r]$ there is a unitary quantum circuit $Q_j$ with time $\poly(mT/\varepsilon)$ and space $O(\log(mT/\varepsilon))$ such that with probability $1-2^{-\poly(mT/\varepsilon)}$, it gives an $(2m)^{-3}\varepsilon^2$-approximation of $\Tr[\rho_TM_j\mt]/\sqrt{m_j}$, which implies an $(2m)^{-1}\varepsilon$-approximation of $\sqrt{\Tr[\rho_TM_j\mt]}$.
	
	Consider the preparation circuit constructed in \cref{thm:prep} which prepares the unit vector \[u=\left(\sqrt{\Tr[\rho_TM_0\mt]},\sqrt{\Tr[\rho_TM_1\mt]},\ldots,\sqrt{\Tr[\rho_TM_{r-1}\mt]}\right).\]
	with error $\varepsilon/2$. By construction, the preparation circuit can be viewed as a composition of $r-1$ unitary operators, each controlled by a different entry in $u$. Since $u$ is not explicitly given, we instead control these unitary operators with the output qubits of $Q_j$, but without measurements. Each circuit $Q_j$ is applied in reverse after the control, so that the space can be reused. 
	
		\begin{figure}[t]
		\centering
		\begin{tikzpicture}
		
		\draw[ultra thick] (0,-0.5) -- (2,-0.5);
		\draw[ultra thick] (0,-0.25) node[left]{$u_j$} -- (2,-0.25);
		\draw[ultra thick] (0,0) -- (2,0);
		\draw[ultra thick] (0,1) -- (2,1);
		\draw[ultra thick] (1,-0.5) node[circle,fill,inner sep=2pt]{} --
		(1,-0.25) node[circle,fill,inner sep=2pt]{} -- 
		(1,0) node[circle,fill,inner sep=2pt]{} -- (1,1);
		\draw[fill=white, thick] (0.7,0.7) rectangle (1.3,1.3) node[midway] {$U$};
		\draw[dashed] (-0.5,0.4) rectangle (2.5,2.7);
		\node[label={[align=center]preparation\\circuit}] at (1,1.5) {};
		
		\node at (3.5,0.5) {\LARGE $\Rightarrow$};
		
		\begin{scope}[xshift=5cm]
		\draw[ultra thick] (0,0) -- (4,0);
		\draw[ultra thick] (0,-0.25) -- (4,-0.25);
		\draw[ultra thick] (0,-0.5) -- (4,-0.5);
		\draw[ultra thick] (0,-0.75) -- (4,-0.75);
		\draw[ultra thick] (0,-1.6) -- (4,-1.6);
		\draw[ultra thick] (0,1) -- (4,1);
		\draw[ultra thick] (2,-0.5) node[circle,fill,inner sep=2pt]{} --
		(2,-0.25) node[circle,fill,inner sep=2pt]{} --
		(2,0) node[circle,fill,inner sep=2pt]{} -- (2,1);
		\draw[fill=white, thick] (1.7,0.7) rectangle (2.3,1.3) node[midway] {$U$};
		\draw[dashed] (-0.5,0.4) rectangle (4.5,2.7);
		\node[label={[align=center]preparation\\circuit}] at (2,1.5) {};
		\draw[fill=white, thick] (0.5,0.25) rectangle (1.5,-1.75) node[midway] {$Q_j\mt$};
		\draw[fill=white, thick] (2.5,0.25) rectangle (3.5,-1.75) node[midway] {$Q_j^{-1}$};
		\node at (0.2,-1.1) {$\vdots$};
		\node at (3.8,-1.1) {$\vdots$};
		\node at (2,-1.1) {$\vdots$};
		\end{scope}
		\end{tikzpicture}
		\caption{The quantum operator $U$ in the preparation circuit controlled by an entry $u_j$ of $u$, in binary representation with classical bits. We replace the classical control by first implementing the circuit $Q_j$, applying the controlled-$U$ operator, and implementing $Q_j$ in reverse.}\label{fig:dist}
	\end{figure}
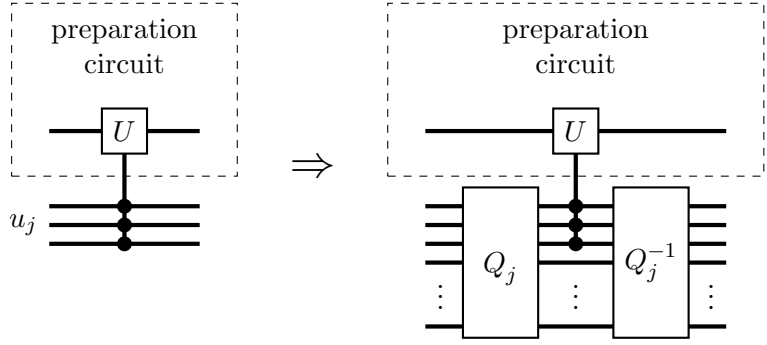
	
	It is clear that the entire circuit is with time $\poly(mT/\varepsilon)$ and space $O(\log(mT/\varepsilon))$. The error introduced by replacing each of th $r-1$ unitary operators is at most $(2m)^{-1}\varepsilon+2^{-\poly(mT/\varepsilon)}$, therefore the total error is at most $\varepsilon/2+(r-1)((2m)^{-1}\varepsilon+2^{-\poly(mT/\varepsilon)})<\varepsilon$. See Figure~\ref{fig:dist} for an illustration.
\end{proof}

The measurement $\{M_0,\ldots,M_{r-1}\}$ in \cref{thm:unitals} could be over any subset of the qubits. In particular, when it is a two-outcome measurement over one qubit, we have the following direct corollary which improves \cref{thm:unital}:

\begin{corollary}
	Given a unital quantum algorithm with time $T$ and space $S=\log m$ specified by the natural representations $K(\Phi_1),\ldots,K(\Phi_T\mt)\in\cc^{m^2\times m^2}$, and an error parameter $\varepsilon>0$, there is a unitary quantum circuit with time $\poly(mT/\varepsilon)$ and space $O(\log(mT/\varepsilon))$, such that if the original unital circuit outputs $0$ with probability $p$, then the unitary circuit outputs $0$ with probability $p+\alpha$, where $|\alpha|\leq \varepsilon$.
\end{corollary}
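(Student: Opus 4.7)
My plan is to derive the stated corollary as a direct specialization of \cref{thm:unitals} to the two-outcome measurement $\{M_0,M_1\}$, where $M_0 = \ket{0}\bra{0} \otimes \id_{m/2}$ matches the paper's convention in \cref{sect:def} for reading off the output bit of a quantum algorithm. With this choice, $p = \Tr[\rho_T\mt M_0\mt]$ is exactly the probability that the original unital algorithm outputs $0$, and no other content of the unital algorithm enters the argument.

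First, I invoke \cref{thm:unitals} with $r = 2$ and a target accuracy $\varepsilon' = \varepsilon/3$. This produces a unitary circuit $W$ on $S' = O(\log(mT/\varepsilon))$ qubits with time $\poly(mT/\varepsilon)$, whose final state vector $w = W\ket{0^{S'}}$ in the computational basis satisfies $\bigl||w_0|^2 - p\bigr| \leq \varepsilon'$ and $\bigl||w_1|^2 - (1-p)\bigr| \leq \varepsilon'$. These are the only two coordinates of $w$ that \cref{thm:unitals} controls.

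Then comes a short accounting step. Because $w$ is a unit vector and its first two amplitude-squareds already account for almost all its mass, the residual satisfies $\sum_{j \geq 2} |w_j|^2 = 1 - |w_0|^2 - |w_1|^2 \leq 2\varepsilon'$. The construction in \cref{thm:unitals} encodes the $r$ outcomes on the preparation register from \cref{thm:prep}, which is a single qubit when $r = 2$; by composing with a SWAP gate (asymptotically negligible) we can place this qubit in the position of ``the first qubit'' demanded by the measurement convention of \cref{sect:def}. Then the output-$0$ probability of the new circuit equals $|w_0|^2 + \gamma$, where $0 \leq \gamma \leq 2\varepsilon'$ collects the contribution from those garbage basis states ($j \geq 2$) whose first qubit happens to be $0$. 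A triangle inequality yields $|\alpha| \leq |(|w_0|^2 - p)| + \gamma \leq \varepsilon' + 2\varepsilon' = \varepsilon$, as claimed.

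There is essentially no new obstacle beyond the one overcome in \cref{thm:unitals}; the only subtleties are identifying the outcome qubit of \cref{thm:unitals} with the ``first qubit'' of \cref{sect:def}, and checking that the $O(\varepsilon')$ garbage contribution fits inside the desired error budget --- both routine. The improvement over \cref{thm:unital} (removing the $\sin^2(\cdot)$ nonlinearity) comes entirely from the fact that \cref{thm:unitals} produces the outcome probabilities \emph{directly} as squared amplitudes of specific basis states, rather than through the amplitude-amplification rotation used in \cref{thm:unital}.
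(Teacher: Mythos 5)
Your proposal is correct and follows essentially the same route as the paper, which obtains this corollary as a direct specialization of \cref{thm:unitals} to a two-outcome measurement over one qubit. Your extra bookkeeping (taking $\varepsilon'=\varepsilon/3$, bounding the residual mass outside coordinates $0$ and $1$ by $2\varepsilon'$, and aligning the outcome qubit with the first-qubit measurement convention) just makes explicit what the paper leaves as routine.
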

	\section{Powering of Non-Contraction Matrices}\label{sect:powering}

In this section we extend the result of \cref{col:power} to matrices that may not necessarily be contractions. Notice that all the results in this section apply to iterative matrix multiplication (which asks about the product of $T$ different matrices instead of the same one as in powering) as well, using the same technique as at the start of the proof for \cref{thm:unital}. We first state the result for general square matrices, while the additive error can be exponentially large with respect to the spectral norm:
\begin{theorem}\label{thm:general}
	Given $m=2^S$, an arbitrary matrix $A\in\cc^{m\times m}$, a positive integer $T$, two unit vectors $v,w\in\cc^m$ and an error parameter $\varepsilon>0$, there is a unitary quantum circuit $W$ with time $\poly(mT/\varepsilon)$ and space $O(\log(mT/\varepsilon))$ such that with probability $1-2^{-\poly(mT/\varepsilon)}$, it outputs $w\ct A^Tv$ with additive error $\varepsilon\cdot\max(1,\|A\|^T)$.
\end{theorem}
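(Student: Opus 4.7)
The plan is to reduce the problem to the contraction case (\cref{col:power}) by normalizing $A$ by an overestimate $s$ of $\|A\|$ satisfying
\[
\|A\| \leq s \leq \|A\|\bigl(1+\tfrac{1}{2T}\bigr).
\]
Given such an $s$, the matrix $B := A/s$ is a contraction with $w\ct A^T v = s^T \cdot w\ct B^T v$. Running \cref{col:power} on $B$ with error parameter $\varepsilon/e$ and success probability $1-2^{-\poly(mT/\varepsilon)}$, and then classically multiplying the output by the scalar $s^T$, yields an approximation of $w\ct A^T v$ with additive error at most
$s^T \cdot \varepsilon/e \leq (1+1/(2T))^T \cdot \max(1,\|A\|^T) \cdot \varepsilon/e \leq \varepsilon \cdot \max(1,\|A\|^T)$,
as required.

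It remains to compute $s$ in unitary quantum logspace. As the techniques section notes, we cannot apply Ta-Shma's spectral-norm algorithm directly to $A$ because that algorithm is only guaranteed for contractions. Instead, we first compute classically in logspace an overestimate $\tilde F$ of the Frobenius norm with $\|A\|_F \leq \tilde F \leq 2\|A\|_F$ (from $\|A\|_F^2 = \sum_{i,j}|A_{i,j}|^2$ and a short binary search for the square root). Since $\|A\| \leq \|A\|_F \leq \tilde F$, the matrix $C := A/\tilde F$ is a contraction with spectral norm $\sigma := \|A\|/\tilde F$, and $\sigma \geq 1/(2\sqrt m)$ by the standard inequality $\|A\|_F \leq \sqrt m\,\|A\|$. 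We then invoke Ta-Shma's spectral-norm algorithm on $C$ with additive error $\eta := 1/(12T\sqrt m)$, obtaining $\tilde\sigma$ with $|\tilde\sigma-\sigma| \leq \eta$, and set
\[
s := \tilde F\,(\tilde\sigma + 2\eta).
\]
The lower bound $s \geq \tilde F(\sigma+\eta) = \|A\| + \tilde F\eta \geq \|A\|$ is immediate. For the upper bound, $\tilde F\eta \leq 2\sqrt m\,\|A\| \cdot \eta = \|A\|/(6T)$, so $s \leq \tilde F(\sigma + 3\eta) = \|A\| + 3\tilde F\eta \leq \|A\|\bigl(1 + 1/(2T)\bigr)$, exactly as needed.

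All subroutines run in time $\poly(mT/\varepsilon)$ and space $O(\log(mT/\varepsilon))$. Both Ta-Shma's spectral-norm algorithm and \cref{col:power} can be made purely unitary with failure probability $2^{-\poly(mT/\varepsilon)}$ by appealing to \cref{thm:main} combined with the exponential error reduction of \cite{fefferman2016space}, exactly as was already done in the proof of \cref{lemma:power}. The classical postprocessing multiplies the output by the scalar $s^T$, whose magnitude can be as large as $\max(1,\|A\|^T)$; this is precisely the scale at which error is permitted, so it causes no difficulty. The main point I expect to need care is verifying that Ta-Shma's spectral-norm algorithm achieves the chosen additive precision $\eta = \Theta(1/(T\sqrt m))$ within the stated resource bounds, but since $1/\eta$ is polynomial in $m$ and $T$ this lies comfortably within quantum logspace, and the rest is clean bookkeeping.
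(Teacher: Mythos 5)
Your proposal is correct and follows essentially the same route as the paper's proof: normalize $A$ by a slight overestimate of $\|A\|$ obtained by running Ta-Shma's spectral-norm algorithm on $A$ scaled down by (an estimate of) its Frobenius norm, then invoke \cref{col:power} on the resulting contraction and rescale the output by the $T$-th power of the estimate. Your bookkeeping (additive shift $2\eta$ versus the paper's multiplicative correction $(1-\sqrt m\varepsilon_1)^{-1}$, and handling $\|A\|<1$ explicitly rather than by the paper's WLOG footnote) differs only cosmetically.
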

\begin{proof}
	Ideally, we would like to apply the contraction powering algorithm on $A/\|A\|$ and multiply the result by $\|A\|^T$. However, the current best quantum algorithm for computing the spectral norm is \cite[Theorem 5.2]{ta2013inverting} which approximates $\|A\|$ with additive error $\varepsilon_1$ within time $\poly(m/\varepsilon_1)$ and space $O(\log(m/\varepsilon_1))$ and only works for contractions $A$. We use this algorithm to approximate $\|A\|$ for arbitrary $A$ with multiplicative error as follows\footnote{During the analysis we assume without loss of generality that $\|A\|\geq 1$, since otherwise it can always be relaxed to $1$ whenever necessary.}: First compute $\|A\|_\mathrm{F}$ in $O(\log m+\log \|A\|_F)=O(\log(m\|A\|_2))$ space. Notice that $A/\|A\|_\mathrm{F}$ is a contraction since $\|A\|_\mathrm{F}\geq \|A\|$. Therefore, let $\sigma$ be the approximation of $\|A/\|A\|_\mathrm{F}\|$ with additive error $\varepsilon_1$ by \cite{ta2013inverting}, then $\sigma\|A\|_F$ approximates $\|A\|$ since
	\[\Big|\sigma\|A\|_F-\|A\|\Big|=\|A\|_F\cdot\Big|\sigma-\big\|A/\|A\|_F\big\|\Big|\leq \sqrt{m}\varepsilon_1\|A\|.\]
	Let $\varepsilon_1=(3T\sqrt{m})^{-1}$, and let $\alpha=(1-\sqrt{m}\varepsilon_1)^{-1}\sigma\|A\|_F$. Then
	\[\|A\|\leq \alpha\leq \frac{1+\sqrt{m}\varepsilon_1}{1-\sqrt{m}\varepsilon_1}\|A\|\leq (1+T^{-1})\|A\|.\]
	Now let $\widetilde{A}=\alpha^{-1}A$ so that $\widetilde{A}$ is always a contraction. Applying the contraction powering algorithm in \cref{col:power} on $\widetilde{A}$ with error $\varepsilon/3$ results in a unitary quantum circuit with time $\poly(mT/\varepsilon)$ and space $O(\log(mT/\varepsilon))$ which outputs $w\ct \widetilde{A}^Tv$ with additive error $\varepsilon/3$. Multiplying it by $\alpha^T$ gives the desired result, while the error is at most $\alpha^T\varepsilon/3\leq\varepsilon\cdot\|A\|^T$.
\end{proof}
	\section{Classical Simulation of Quantum Learning}\label{sect:learn}

\subsection{Equivalence of Classical Simulation in Decision and Learning}
	\begin{theorem}\label{thm:learn1}
		If there are functions $t(\cdot,\cdot)$ and $s(\cdot,\cdot)$, such that every unitary quantum learning algorithm with time $T$ and space $S$ can be simulated classically with time $t(T,S)$ and space $s(T,S)$, then
		\[\mathsf{promiseBQ_UL}\subseteq\mathsf{promiseBPTISP}(t(\poly(n),O(\log n)),s(\poly(n),O(\log n))).\]
		Specifically, if every unitary quantum learning algorithm in time $T$ and space $S$ can be simulated classically with time $\poly(2^ST)$ and space $O(S+\log T)$, then $\mathsf{promiseBQ_UL}=\mathsf{promiseBPL}$.
	\end{theorem}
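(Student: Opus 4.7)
The plan is to embed any $\mathsf{promiseBQ_UL}$ decision problem into a unitary quantum learning problem and then invoke the classical-simulation hypothesis. Given a $\mathsf{promiseBQ_UL}$ algorithm with gate map $\ud(x,\cdot)$ of length $T=\poly(n)$ on $S=O(\log n)$ qubits, I will take the sample set to be $\Gamma=\{0,1\}^n$, the distribution $D_x$ to be the point mass at $x$, and the two families to be $\mathcal{X}=\{D_x:f(x)=1\}$ and $\mathcal{Y}=\{D_x:f(x)=0\}$. The conceptual obstacle is that the mapping $\ul$ depends only on a single sample, while the learning algorithm will receive the same sample $x$ at every step, so a single fixed channel must implement all $T$ different gates $\Phi_1,\ldots,\Phi_T$ when applied $T$ times.

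I resolve this by enlarging the workspace with a counter register $C$ of dimension $T+1$ (only $O(\log T)=O(\log n)$ extra qubits) and defining $\ul(x)$ to be the unitary channel associated with the unitary $U_x$ that acts on the computational basis by
\[U_x\big(\ket{\psi}\otimes\ket{c}_C\big)=\begin{cases}\big(\ud(x,c)\ket{\psi}\big)\otimes\ket{c+1}_C & \text{if }c<T,\\ \ket{\psi}\otimes\ket{0}_C & \text{if }c=T.\end{cases}\]
The counter update is the cyclic shift $c\mapsto c+1\pmod{T+1}$, which is a permutation, and the action on $\ket{\psi}$ is the unitary $\ud(x,c)$ controlled on $c$, so $U_x$ is indeed unitary. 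Starting from $\ket{0^S}\otimes\ket{0}_C$ and running for $T'=T$ steps with every sample equal to $x$, the counter evolves $0\to 1\to\cdots\to T$ while $\Phi_1,\ldots,\Phi_T$ are applied in order; after step $T$ the state on the original $S$ qubits is exactly the decision algorithm's $\rho_T$, and measuring the first qubit reproduces the decision output. I will also verify that each entry of $K(\ul(x))=\overline{U_x}\otimes U_x$ is computable deterministically in time $O(T)$ and space $O(S+\log T)$ from a single invocation of $\ud$, so $\ul$ meets the learning-algorithm constructibility requirement.

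Applying the hypothesis then produces a classical simulator running in time $t(T,S+O(\log T))=t(\poly(n),O(\log n))$ and space $s(T,S+O(\log T))=s(\poly(n),O(\log n))$, which establishes the general containment. For the specialization $t(T,S)=\poly(2^ST)$ and $s(T,S)=O(S+\log T)$ these bounds evaluate to $\poly(n)$ and $O(\log n)$, giving $\mathsf{promiseBQ_UL}\subseteq\mathsf{promiseBPL}$. The reverse inclusion $\mathsf{promiseBPL}\subseteq\mathsf{promiseBQ_UL}$ is standard, via coherent preparation of random bits by Hadamards together with the reversible simulation of deterministic logspace of \cite{lange2000reversible}, giving the claimed equality. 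The main subtlety is keeping the counter increment genuinely unitary, which is why I use a cyclic shift rather than a saturating counter; because the first $T$ steps start from counter value $0$, the cyclic boundary is never reached and the modular wraparound has no effect on correctness.
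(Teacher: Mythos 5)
Your main construction is exactly the paper's: you turn a $\mathsf{promiseBQ_UL}$ algorithm into a unitary learning problem whose samples are the constant input $x$, adjoin a counter register of dimension $O(T)$ (an extra $O(\log n)$ qubits), and let the single channel $\ul(x)$ apply $\ud(x,c)$ controlled on the counter and then increment it by a cyclic shift; after $T$ steps the work register holds $\rho_T$, the hypothesis yields a classical learner, and viewing that learner on the constant-$x$ stream as a randomized decision algorithm gives the containment $\mathsf{promiseBQ_UL}\subseteq\mathsf{promiseBPTISP}(t(\poly(n),O(\log n)),s(\poly(n),O(\log n)))$. This matches the paper's proof of \cref{thm:learn1} essentially line for line (the paper also notes an alternative with a smaller sample space via the random-query model, which is optional), and your constructibility and parameter checks are fine.

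The one place you go beyond the paper's displayed argument is your justification of $\mathsf{promiseBPL}\subseteq\mathsf{promiseBQ_UL}$, and that justification does not work as stated. ``Hadamard coins plus the reversible simulation of \cite{lange2000reversible}'' runs into the same obstruction that makes eliminating intermediate measurements nontrivial in the first place: a $\mathsf{BPL}$ machine consumes $\poly(n)$ random bits, a unitary logspace machine has only $O(\log n)$ qubits, and a used coin qubit cannot be reset or recycled unitarily (it stays entangled with the work register), while the Lange--McKenzie--Tapp construction applies to deterministic computation and does not give a space-preserving reversible simulation of a machine with a read-once random tape. Note also that the per-step channel of a $\mathsf{BPL}$ machine is stochastic but generally not unital, so the paper's own \cref{thm:unitals} does not supply this inclusion either; it follows from the stronger $\mathsf{BQL}=\mathsf{BQ_UL}$ of \cite{fefferman2020eliminating}, and the paper's proof of \cref{thm:learn1} likewise proves only the forward containment and takes the reverse inclusion as known rather than deriving it by your argument. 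So either cite that result (or an equivalent) for the reverse direction, or drop the claimed elementary derivation; as written, that step is a genuine gap, though it lies outside the part of the theorem the paper actually argues.
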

	\begin{proof}
		Suppose that we have a unitary quantum circuit with time $T(n)=\poly(n)$ and space $S(n)=O(\log n)$ that decides a partial function $f:X\rightarrow\{0,1\}$, where $X\subseteq\{0,1\}^n$. Let $\ud(x,i)$ be the unitary gate at the $i$-th step of the decision algorithm with input $x$, which can be constructed in time $\poly(n)$ and space $O(\log n)$.
		
		We can convert the quantum circuit to a learning algorithm as follows. Use $X$ directly as the sample space, while the samples are always constant $x$ for some fixed $x\in X$. The learning task is to distinguish between $x\in f^{-1}(0)$ or $x\in f^{-1}(1)$. Upon receiving the sample $x$, the learning algorithm simply applies the following unitary operator on $\cc^{2^{S(n)}}\otimes\cc^{T(n)}$:
		\[\ket{\psi}\ket{i}\rightarrow \big(\ud(x,i)\ket{\psi}\big)\ket{(i+1)\textrm{ mod } T(n)}\]
		so that after $T(n)$ steps it computes in the first register the same state as in the quantum circuit. Therefore it computes $f(x)$ and distinguishes between the two cases. Using the premises, we have a classical learning algorithm with time $t(\poly(n),O(\log n))$ and space $s(\poly(n),O(\log n))$ that accomplishes the same task. The classical learning algorithm can be viewed as a randomized decision algorithm that computes $f(x)$ by self-constructing the stochastic matrices in the same time and space.
		
		Alternatively, in the learning task we can have a potentially much smaller sample space $\Gamma=\{0,1\}\times[n]$, by viewing the learning problem as computing $f$ in the random-query model \cite{raz2020random}. For each $x\in X$, let distribution $D_x$ be the one that uniformly draws $i\in[n]$ and outputs $(x_i,i)$. Let $\mathcal{X}=\{D_x\mid x\in f^{-1}(0)\}$ and $\mathcal{Y}=\{D_x\mid x\in f^{-1}(1)\}$. Each $x_i$ can be retrieved within $O(n\log n)$ samples with high probability, therefore the quantum learning algorithm can compute each $\ud(x,i)$ with high probability in time $\poly(n)$ and space $O(\log n)$, and the rest of the proof is the same as above.
	\end{proof}
	
	\begin{theorem}\label{thm:learn2}
		If $\textsc{ContractionPowering}\in\mathsf{promiseBPTISP}(t(n),s(n))$, where $t(n)\geq\Omega(n)$ and $s(n)\geq\Omega(\log n)$, then every unital quantum learning algorithm with time $T$ and space $S$ can be simulated classically with time $t(\poly(2^ST))$ and space $s(\poly(2^ST))$.
	\end{theorem}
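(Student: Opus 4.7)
The plan is to reduce the classical simulation of the given unital quantum learning algorithm to a single instance of \textsc{ContractionPowering} in dimension $n=\poly(2^ST)$, mirroring the construction in the proof of \cref{thm:unital}. Given the sample sequence $z=(z_1,\ldots,z_T)$ on the sample tape, the natural representations $K(\Phi_i)=K(\ul(z_i))$ are $m^2\times m^2$ contractions since each $\Phi_i$ is unital. Assembling them into the block contraction $A\in\cc^{m^2T\times m^2T}$ of \cref{thm:unital}, and choosing $v=\vect(\rho_0)\otimes\ket{0}$ together with $w=\sqrt{2/m}\,\vect(M_0)\otimes\ket{0}$, one has
\[ w\ct A^T v=\sqrt{2/m}\cdot p(z), \]
where $p(z)=\Tr[\rho_T\mt M_0\mt]$ is the probability that the unital learner outputs $0$ conditioned on the sample stream $z$. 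Since the learning algorithm's specification guarantees that each entry of $K(\ul(z))$ is computable in time $\poly(T)$ and space $O(S)$, every entry of $A$, $v$ and $w$ is computable in time $\poly(2^ST)$ and space $O(S+\log T)$, well within the eventual budget.

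Next, I would invoke the hypothesized $\mathsf{promiseBPTISP}(t,s)$ algorithm for \textsc{ContractionPowering} on this instance to approximate $|w\ct A^T v|^2$, and hence $p(z)$, to a small constant additive precision $\varepsilon$. Because the hypothesis only yields the $[0,1/3]$-vs-$[2/3,1]$ decision directly, I would lift it to approximate value computation by three standard black-box reductions: first, amplify the constant decision error to any $1/\poly$ error by majority vote over independent runs (reusing space); second, realize the decision $|w\ct A^T v|^2\gtrless\theta$ for an arbitrary threshold $\theta\in(0,1)$ by padding $A$ with a fixed diagonal ancilla block and correspondingly extending $v,w$ with fixed components, so that the altered instance shifts the threshold into the $[1/3,2/3]$ window while remaining a valid contraction-powering instance (with $A$ a contraction and $v,w$ unit vectors); third, binary-search on $\theta$ with $O(\log(1/\varepsilon))$ queries to extract $\tilde p(z)$. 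The inner algorithm is invoked on instances of length $\poly(2^ST)$, and the polylogarithmic overhead is absorbed into the argument of $t$ and $s$, so the total complexity remains $t(\poly(2^ST))$ time and $s(\poly(2^ST))$ space.

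Finally, the simulator outputs $0$ with probability equal to the numerical estimate $\tilde p(z)$, implemented by comparing $\tilde p(z)$ against a uniformly random real of appropriate precision. Averaging over $z\sim D^T$, the classical simulator outputs $0$ with probability $\e_z[\tilde p(z)]=\e_{z\sim D^T}[\Tr[\rho_T\mt M_0\mt]]\pm\varepsilon$, which distinguishes $\mathcal{X}$ from $\mathcal{Y}$ with essentially the same bounded error as the quantum learner, and any residual loss can be re-amplified by constantly many repetitions. The main technical obstacle I anticipate is the padding step in the decision-to-value reduction: one must design the ancilla construction so that an arbitrary threshold $\theta$ can be moved into the $[1/3,2/3]$ promise window while keeping $A$ a contraction and $v,w$ unit vectors, and verify that the composition of error amplification, threshold shifting and binary search does not inflate the space usage beyond $s(\poly(2^ST))$. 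This is routine but warrants care, since we are invoking a classical logspace algorithm as a subroutine.
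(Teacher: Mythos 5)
There is a genuine gap, and it is the central idea of the theorem rather than a technicality. You build a \textsc{ContractionPowering} instance whose matrix depends on the \emph{entire realized sample stream} $z=(z_1,\ldots,z_T)$ (the block matrix of \cref{thm:unital} with blocks $K(\ul(z_i))$), and then want to hand it to the hypothesized classical algorithm. But the classical simulator is itself a bounded-space learning algorithm: it sees the samples online, cannot store the stream (space $s(\poly(2^ST))$ is far below $T$ in the intended regime $s(n)=O(\log n)$), and the \textsc{ContractionPowering} subroutine needs to query entries of its input matrix many times and in arbitrary order. Re-drawing fresh samples at each query does not fix this, since then different queries to the ``same'' block come from different samples and the computation no longer corresponds to powering any well-defined matrix, nor to $p(z)$ for any fixed $z$. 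The missing idea is to exploit that the samples are i.i.d.: the acceptance probability averaged over the stream equals $\vect(M_0)\ct A^T\vect(\rho_0)$ for the single fixed $m^2\times m^2$ contraction $A=\e_{z\sim D}[K(\ul(z))]$, so one should run the classical powering algorithm on this \emph{averaged} matrix, estimating each requested entry from fresh samples via \cref{lemma:ch}. This is what the paper does, and it is precisely what makes the streaming access pattern compatible with the subroutine.

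Once you take that route, the real technical issues are different from the ones you anticipate. First, repeated requests to the same entry of $A$ return different random estimates, and the black-box classical algorithm need not be robust to inconsistent answers; the paper resolves this with the Saks--Zhou shift-and-truncate trick (a random shift $\zeta$ makes the rounded estimates coincide with a fixed rounding of the true entries with high probability). Second, the estimated matrix need not be a contraction, which is fixed by rescaling by $L/(L+1)$ and controlling the induced error via \cref{lemma:error}. By contrast, your decision-to-value lift (threshold padding plus binary search) is unnecessary: amplifying the quantum learner's error to $1/4$ leaves enough slack that, after the $O(1/12)$ approximation error, the quantity $\vect(M_0)\ct \widetilde{A}^T\vect(\rho_0)$ lands directly in $[0,1/3]\cup[2/3,1]$, so a single call to the promise decision algorithm suffices; moreover your padding step is only sketched and would need care because the promise is stated for $|w\ct A^Tv|^2$ rather than for a real linear functional.
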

	\begin{proof}
		Suppose that we have a unital quantum learning algorithm with time $T$ and space $S=\log m$ that distinguishes between two distribution families $\mathcal{X}$ and $\mathcal{Y}$. Let $\ul(z)$ be the unital channel applied when receiving the sample $z$. With the sample distribution $D$, let $A=\e_{z\sim D}[K(\ul(z))]$. We note that $A$ is a contraction matrix of dimension $m^2\times m^2$ as every $K(\ul(z))$ is a contraction. Similar to proof of \cref{thm:unital}, the probability of the learning algorithm outputting $0$ is 
		\[\e_{z\sim D^T}\big[\vect(M_0)\ct K(\Phi_T\mt)\cdots K(\Phi_1\mt)\vect(\rho_0\mt)\big]=\vect(M_0)\ct A^T \vect(\rho_0\mt).\]
		What's different from \cref{thm:unital} is that here $A$ is not explicitly given. Instead, by \cref{lemma:ch}, each time an entry of $A$ is requested, it takes $\poly(mT)$ samples $z$ to approximate the entry to at most $O((m^{2.5}T)^{-1})$ error, so that the approximated matrix $\widetilde{A}$ differs from the actual matrix $A$ by at most $\|\widetilde{A}-A\|\leq O((\sqrt{m}T)^{-1})$. By \cref{lemma:error} it means that $\|\widetilde{A}^T-A^T\|\leq O(m^{-1/2})$. Therefore applying the contraction powering algorithm on $\widetilde{A}$ gives a classical learning algorithm that distinguishes $\mathcal{X}$ and $\mathcal{Y}$ in time $t(\poly(mT))$ and space $s(\poly(mT))$.
		
		The above scheme has two problems. First, a fixed matrix $\widetilde{A}$ cannot be directly stored, and if every time the same entry is requested, the entry is approximated as the average of a different batch of samples, it may result in different requested values for the same entry (even though the difference is small with high probability), similar to the problem in \cref{lemma:power}. However, unlike the case in \cref{lemma:power}, here the classical contraction powering algorithm is not explicitly given, and may not be robust against changing inputs.
		
		The solution to this problem is the \textit{shift and truncate} method by Saks and Zhou\cite{saks1999bphspace}, which has found numerous applications in space-bounded algorithms \cite{ta2013inverting} and derandomizations \cite{cai2006time, grossman2019reproducibility}. Concretely, let $P=t(\poly(mT))$ be the largest number of possible requests to entries of $A$ in the contraction powering algorithm, and take a uniform random number $\zeta\in[8L]$. For simplicity let $L=12\sqrt{2m}T$ and $N=24m^{2.5}T$. When the entry $A_{jk}$ is requested, the algorithm takes $t(\poly(mT))$ samples $z_i$ and calculate the average value $a$ of the $(j,k)$-entries of $K(\ul(z_i))$, so that $|a-A_{jk}|<\frac{1}{8NP}$ with probability at least $1-2^{-P}$. The value fed back for the request is 
		\[\widetilde{A}_{jk}=\frac{1}{N}\left\lfloor N\cdot\mathrm{Re}(a)+\frac{\zeta}{8P}\right\rfloor
		+\frac{\mathrm{i}}{N}\left\lfloor N\cdot\mathrm{Im}(a)+\frac{\zeta}{8P}\right\rfloor.\]
		We claim that with high probability, this value coincides with the fixed value
		\[\frac{1}{N}\left\lfloor N\cdot\mathrm{Re}(A_{jk})+\frac{\zeta}{8P}\right\rfloor
		+\frac{\mathrm{i}}{N}\left\lfloor N\cdot\mathrm{Im}(A_{jk})+\frac{\zeta}{8P}\right\rfloor.\]
		For the real part, as $|N\cdot\mathrm{Re}(a)-N\cdot\mathrm{Re}(A_{jk})|<\frac{1}{8P}$, there is at most one possibility for $\zeta$ such that $\left\lfloor N\cdot\mathrm{Re}(a)+\frac{\zeta}{8P}\right\rfloor\neq \left\lfloor N\cdot\mathrm{Re}(A_{jk})+\frac{\zeta}{8P}\right\rfloor$, which is of probability $\frac{1}{8P}$, and the same holds for the imaginary part. By the union bound on the bad events during all $L$ requests, with probability 
		\[1-\left(2^{-P}+\frac{1}{4P}\right)P\geq \frac{2}{3}\]
		for every $(j,k)$ the value $\widetilde{A}_{jk}$ are always the same, and $|\widetilde{A}_{jk}-A_{jk}|\leq \frac{\sqrt{2}}{N}=\frac{1}{m^2L}$, so $\|\widetilde{A}-A\|\leq L^{-1}$.
		
		The second problem is that because of the approximation error, $\widetilde{A}$ might not be a contraction matrix. This is easily fixed by using the matrix $\widetilde{A}'=\frac{L}{L+1}\cdot\widetilde{A}$
		as the input. Since $\|\widetilde{A}-A\|\leq L^{-1}$ with probability $2/3$, it is implied that
		\[\|\widetilde{A}'\|=\frac{L}{L+1}\cdot\|\widetilde{A}\|
		\leq \frac{L}{L+1}\cdot\left(1+L^{-1}\right)=1,\]
		\[\|\widetilde{A}'-A\|\leq \|\widetilde{A}-A\|+\frac{1}{L+1}\|\widetilde{A}\|\leq\frac{2}{L}.\]
		Since $\|\vect(M_0)\|_2=\sqrt{m/2}$, $\|\vect(\rho_0\mt)\|_2=1$, in this case we have (by \cref{lemma:error})
		\[\left|\vect(M_0)\ct(\widetilde{A}'^T-A^T)\vect(\rho_0\mt)\right|\leq \frac{\sqrt{2m}T}{L}=\frac{1}{12}.\]
		
		Since the error of the original quantum learning algorithm can be amplified to $1/4$ so that $\vect(M_0)\ct A^T \vect(\rho_0\mt)$ is in $[0,1/4]$ or $[3/4,1]$, we conclude that with probability $5/6$, 
		\[\vect(M_0)\ct \widetilde{A}'^T \vect(\rho_0\mt)\in[0,1/3]\textrm{ or }[2/3,1]\]
		Therefore the two cases can be distinguished by the classical contraction powering algorithms on $\widetilde{A}'$, and it can be repeated for constant rounds so that the total error rate is brought down to $1/3$.
	\end{proof}

	\begin{corollary}
		If $\textsc{ContractionPowering}\in\mathsf{promiseBPL}$, then every unital quantum learning algorithm with time $T$ and space $S$ can be simulated classically with time $\poly(2^ST)$ and space $O(S+\log T)$.
	\end{corollary}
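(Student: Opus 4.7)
The plan is to obtain this as a direct instantiation of \cref{thm:learn2}, using only the definition of $\mathsf{promiseBPL}$ given in \cref{sect:def}. Specifically, I would first recall that by definition
\[\mathsf{promiseBPL}=\mathsf{promiseBPTISP}(\poly(n),\log(n)),\]
so the hypothesis $\textsc{ContractionPowering}\in\mathsf{promiseBPL}$ is the same as $\textsc{ContractionPowering}\in\mathsf{promiseBPTISP}(t(n),s(n))$ for some $t(n)=\poly(n)$ with $t(n)\geq\Omega(n)$ and $s(n)=O(\log n)$ with $s(n)\geq\Omega(\log n)$. These are exactly the hypotheses required by \cref{thm:learn2}.

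Next, I would invoke \cref{thm:learn2} with these $t$ and $s$, which yields, for any unital quantum learning algorithm with time $T$ and space $S$, a classical simulation with time $t(\poly(2^ST))$ and space $s(\poly(2^ST))$. The final step is to simplify these two quantities. For the time, since $t$ is a polynomial, $t(\poly(2^ST))=\poly(\poly(2^ST))=\poly(2^ST)$, using the fact that a polynomial of a polynomial remains a polynomial in the underlying parameter. For the space, $s(\poly(2^ST))=O(\log(\poly(2^ST)))=O(\log 2^S+\log T)=O(S+\log T)$.

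There is essentially no obstacle here beyond this substitution; the entire content of the corollary has been isolated inside \cref{thm:learn2}, which itself encapsulates the Saks--Zhou shift-and-truncate machinery and the sampling-based approximation of the expected natural representation. The only thing to verify carefully is that the parameter $n=\poly(2^ST)$ appearing as the input size to the contraction powering instance is at least $\Omega(n)$ in the appropriate sense so that $\mathsf{promiseBPL}$'s resource bounds apply, which is immediate because $2^ST\geq 1$ and all the implicit constants from \cref{thm:learn2} compose into polynomials.
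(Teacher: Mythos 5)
Your proposal is correct and matches the paper's intent exactly: the corollary is stated without proof precisely because it is the instantiation of \cref{thm:learn2} with $t(n)=\poly(n)$ and $s(n)=O(\log n)$, followed by the simplifications $t(\poly(2^ST))=\poly(2^ST)$ and $s(\poly(2^ST))=O(S+\log T)$ that you carry out.
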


	Since by \cref{thm:power} we already know $\textsc{ContractionPowering}\in\mathsf{promiseBQ_UL}$, combined with \cref{thm:learn1}, we get the equivalence between efficient simulations of decision problems and learning problems:
	\begin{theorem}\label{thm:lequiv}
		Every (unital) quantum learning algorithm with time $T$ and space $S$ can be simulated classically with time $\poly(2^ST)$ and space $O(S+\log T)$, if and only if $\mathsf{promiseBQ_UL}=\mathsf{promiseBPL}$.
	\end{theorem}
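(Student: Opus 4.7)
The plan is to obtain this equivalence as an immediate combination of \cref{thm:learn1}, \cref{thm:learn2}, and \cref{thm:power}, with the trivial inclusion $\mathsf{promiseBPL}\subseteq\mathsf{promiseBQ_UL}$ supplying the easy side of the class equality. There is essentially no new content to introduce: both directions reduce to invoking already-proved results in the right order.

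For the ``only if'' direction, assume that every unital (in particular every unitary) quantum learning algorithm with time $T$ and space $S$ admits a classical simulation in time $\poly(2^S T)$ and space $O(S+\log T)$. I would plug $t(T,S) = \poly(2^S T)$ and $s(T,S) = O(S+\log T)$ into the general statement of \cref{thm:learn1}; evaluated at $T = \poly(n)$ and $S = O(\log n)$ this yields $\mathsf{promiseBQ_UL} \subseteq \mathsf{promiseBPTISP}(\poly(n), O(\log n)) = \mathsf{promiseBPL}$. Combined with the trivial simulation of a probabilistic machine by a unitary quantum machine (encoding coin flips via Hadamards and deferring the final read-out), one gets $\mathsf{promiseBQ_UL} = \mathsf{promiseBPL}$.

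For the ``if'' direction, assume $\mathsf{promiseBQ_UL} = \mathsf{promiseBPL}$. By \cref{thm:power}, $\textsc{ContractionPowering} \in \mathsf{promiseBQ_UL}$, so under the hypothesis we have $\textsc{ContractionPowering} \in \mathsf{promiseBPL}$. The corollary immediately following \cref{thm:learn2} then provides exactly the desired classical simulation of every unital quantum learning algorithm with time $T$ and space $S$ in time $\poly(2^S T)$ and space $O(S+\log T)$.

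Since both directions are pure citation-chaining, no step should be an obstacle; the only mild care I would take is to make sure the quantitative parameters line up: \cref{thm:learn1} outputs $\mathsf{promiseBPTISP}(t(\poly(n),O(\log n)), s(\poly(n),O(\log n)))$, which with our $(t,s)$ collapses to $\mathsf{promiseBPTISP}(\poly(n),O(\log n)) = \mathsf{promiseBPL}$, and the corollary to \cref{thm:learn2} is stated in precisely the $(\poly(2^S T), O(S+\log T))$ form required here. Hence the equivalence follows at once.
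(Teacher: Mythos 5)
Your proposal is correct and follows essentially the same route as the paper: the ``if'' direction is \cref{thm:power} combined with the corollary to \cref{thm:learn2}, and the ``only if'' direction is the ``Specifically'' clause of \cref{thm:learn1}; the paper proves \cref{thm:lequiv} by exactly this citation-chaining. One caveat concerns your aside that $\mathsf{promiseBPL}\subseteq\mathsf{promiseBQ_UL}$ is ``trivial'' by generating coin flips with Hadamards and deferring the read-out: this argument does not work in the logspace regime, since a $\mathsf{BPL}$ machine tosses polynomially many coins (so deferring their measurements requires polynomially many qubits) and its classical transitions are irreversible --- this is precisely the obstruction that motivates the paper and the Lange--McKenzie--Tapp line of work, and it is not repaired by \cref{thm:main} either, because the reset/erasure channel a classical machine needs is not unital. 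Fortunately you do not need your own argument here: the ``Specifically'' clause of \cref{thm:learn1} is already stated as the equality $\mathsf{promiseBQ_UL}=\mathsf{promiseBPL}$, so you may cite it verbatim; if you wish to justify the containment $\mathsf{promiseBPL}\subseteq\mathsf{promiseBQ_UL}$ itself, it should be attributed to the stronger result $\mathsf{BQL}=\mathsf{BQ_UL}$ of Fefferman and Remscrim rather than to a deferred-measurement trick.
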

	Also, as we already know $\mathsf{promiseBQ_UL}\subseteq\mathsf{promiseL^2}$ \cite{watrous1999space}, we have the following unconditional result:
	\begin{corollary}\label{col:NC2}
		Every unital quantum learning algorithm with time $T$ and space $S$ can be simulated classically with time $2^{O(S^2+\log^2 T)}$ and space $O(S^2+\log^2 T)$.
	\end{corollary}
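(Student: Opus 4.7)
The plan is to combine the contraction powering result with the known space bound on unitary quantum logspace and then feed the result through \cref{thm:learn2}. Specifically, by \cref{thm:power} we already have $\textsc{ContractionPowering} \in \mathsf{promiseBQ_UL}$, and the cited inclusion $\mathsf{promiseBQ_UL} \subseteq \mathsf{promiseL}^2$ of Watrous places the problem in deterministic space $O(\log^2 n)$. Any such deterministic space-$s(n)$ algorithm can be simulated in time $2^{O(s(n))}$ by enumerating configurations, so we get
\[
\textsc{ContractionPowering} \in \mathsf{promiseBPTISP}\bigl(2^{O(\log^2 n)},\, O(\log^2 n)\bigr).
\]

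Next, I would plug these bounds into \cref{thm:learn2} with $t(n) = 2^{O(\log^2 n)}$ and $s(n) = O(\log^2 n)$. The theorem substitutes $n \leftarrow \poly(2^S T)$ into these functions, so I would just compute: for $n = \poly(2^S T)$ we have $\log n = O(S + \log T)$ and hence $\log^2 n = O(S^2 + \log^2 T)$. This gives time $2^{O(S^2 + \log^2 T)}$ and space $O(S^2 + \log^2 T)$ for the classical simulation of the unital quantum learning algorithm, exactly as claimed.

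There is essentially no obstacle here: both ingredients are already established earlier in the paper (or cited from prior work), and the corollary is a direct instantiation. The only thing to be careful about is that the $t(n)$ and $s(n)$ hypotheses of \cref{thm:learn2} require $t(n) \geq \Omega(n)$ and $s(n) \geq \Omega(\log n)$; the bound $s(n) = O(\log^2 n)$ clearly satisfies the latter, and $t(n) = 2^{O(\log^2 n)}$ dominates $n$, so the hypotheses are met. I would write the proof as a single short paragraph: invoke \cref{thm:power} and the Watrous inclusion, note the deterministic simulation of space in exponential time, apply \cref{thm:learn2}, and do the substitution $\log(\poly(2^S T)) = O(S + \log T)$.
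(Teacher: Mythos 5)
Your proposal is correct and is essentially the paper's own argument: the corollary is stated immediately after noting $\mathsf{promiseBQ_UL}\subseteq\mathsf{promiseL^2}$, with the same implicit chain of \cref{thm:power}, the deterministic space-$O(\log^2 n)$ (hence time $2^{O(\log^2 n)}$) algorithm for \textsc{ContractionPowering}, and the substitution $\log(\poly(2^S T)) = O(S+\log T)$ into \cref{thm:learn2}. Your additional check of the $t(n)\geq\Omega(n)$, $s(n)\geq\Omega(\log n)$ hypotheses is a fine (and harmless) extra detail.
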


	\subsection{Classical Simulation when One Family is Singleton}
	
	\begin{theorem}\label{thm:singleton}
		If $\mathcal{Y}=\{Y\}$, then any quantum learning algorithm that distinguishes between $\mathcal{X}$ and $\mathcal{Y}$ within time $T$ and space $S$ can be simulated classically in time $\poly(2^ST)$ and space $O(S+\log T)$.
	\end{theorem}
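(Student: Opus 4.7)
The plan is to bypass any classical contraction-powering primitive (which is not known to be in $\mathsf{BPL}$) by certifying, entry by entry, that the mean natural representation $A_D := \e_{z\sim D}[K(\ul(z))]$ differs from the known matrix $A_Y := \e_{z\sim Y}[K(\ul(z))]$. Both are $m^2\times m^2$ contractions (since each $\ul(z)$ is unital, $K(\ul(z))$ is a contraction, and contractions are closed under convex combinations), and the probability that the learning algorithm outputs $0$ equals $\vect(M_0)\ct A_D^T\vect(\rho_0)$, exactly as in the proof of \cref{thm:learn2}.

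\textbf{Step 1: lower bound on the entry-wise gap in $A_D-A_Y$.} For every $D\in\mathcal{X}$ the promise yields $\bigl|\vect(M_0)\ct (A_D^T-A_Y^T)\vect(\rho_0)\bigr|\geq 1-2\varepsilon$. Since $\|\vect(M_0)\|_2=\sqrt{m/2}$ and $\|\vect(\rho_0)\|_2=1$, this forces $\|A_D^T-A_Y^T\|\geq(1-2\varepsilon)\sqrt{2/m}$. Applying \cref{lemma:error} to the contraction pair gives $\|A_D-A_Y\|\geq\Omega(1/(\sqrt m\cdot T))$. Combining this with the elementary bound that any $N\times N$ matrix has at least one entry of absolute value $\geq \|M\|/N$ (valid because $\|M\|\leq \|M\|_{\mathrm F}\leq N\cdot\max_{j,k}|M_{j,k}|$), applied with $N=m^2$, we obtain a pair $(j,k)$ with
\[ |(A_D)_{jk}-(A_Y)_{jk}|\geq \Omega\!\left(\tfrac{1}{m^{2.5}\,T}\right). \]

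\textbf{Step 2: a classical entry-wise test.} The classical algorithm iterates over all $m^4$ pairs $(j,k)\in[m^2]\times[m^2]$. For each such pair it (i) computes a $1/(10m^{2.5}T)$-approximation of $(A_Y)_{jk}$, either by direct summation from the explicit description of $Y$ or by drawing $N=O(m^5T^2\log m)$ i.i.d.\ samples from $Y$ and applying \cref{lemma:ch}; (ii) reads $N$ fresh samples $z_1,\ldots,z_N$ from the input stream (the unknown $D$) and forms $\bar a=\tfrac{1}{N}\sum_{i=1}^N K(\ul(z_i))_{jk}$, which by \cref{lemma:ch} lies within $1/(10m^{2.5}T)$ of $(A_D)_{jk}$ except with probability $\leq 1/m^5$; and (iii) raises a flag if $|\bar a-(A_Y)_{jk}|>1/(2m^{2.5}T)$. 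The algorithm outputs $0$ (interpreted as ``$D\in\mathcal{X}$'') if any flag is raised, and $1$ otherwise.

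\textbf{Correctness, complexity, and main obstacle.} A union bound over the $m^4$ iterations shows that all per-entry estimates are simultaneously accurate except with probability $O(1/m)$. Under $D=Y$ every true entry-wise difference vanishes, so no flag is raised; under $D\in\mathcal{X}$ the entry pinpointed in Step~1 is certain to trigger. The total running time is $m^4\cdot N\cdot O(T)=\poly(2^S\,T)$, and only $O(S+\log T)$ working space is required to maintain the current index $(j,k)\in[m^2]^2$, a sample counter bounded by $N$, a running sum with $O(S+\log T)$ bits of precision, and the standing estimate of $(A_Y)_{jk}$. The one mildly delicate point is arranging part~(i) in small space when $Y$ lacks a closed-form evaluator; in that case Monte Carlo estimation from $Y$-samples of size $N$ suffices, since under the singleton hypothesis we have full generative access to $Y$. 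It is precisely this singleton hypothesis that lets us know $A_Y$ in advance and so reduce the problem to a collection of $m^4$ independent Chernoff tests; with two arbitrary unknown families the differing entry would depend on both $D_1,D_2\in\mathcal{X}\cup\mathcal{Y}$, and one would be forced into a global powering computation, which by \cref{thm:lequiv} is equivalent to the open question $\mathsf{promiseBQ_UL}=\mathsf{promiseBPL}$.
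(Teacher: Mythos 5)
Your proposal is correct and follows essentially the same route as the paper's proof: use the fixed matrix $B=\e_{z\sim Y}[K(\ul(z))]$ afforded by the singleton hypothesis, deduce from the distinguishing promise and \cref{lemma:error} that some entry of $\e_{z\sim D}[K(\ul(z))]-B$ has magnitude $\mathrm{poly}(1/(mT))$ for every $D\in\mathcal{X}$, and then run an entry-by-entry Chernoff test (\cref{lemma:ch}) over all $m^4$ entries in time $\poly(2^ST)$ and space $O(S+\log T)$. Your additional remarks on how to obtain $(A_Y)_{jk}$ and your slightly sharper entry-gap constant are minor refinements of the same argument, not a different approach.
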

	\begin{proof}
		Suppose that we have a unital quantum learning algorithm with time $T$ and space $S=\log m$ that distinguishes between $\mathcal{X}$ and $\{Y\}$. Let $\ul(z)$ be the unital channel applied when receiving the sample $z$. We already know from \cref{thm:learn2} that with the sample distribution $D$, the output probability is $\vect(M_0)\ct A^T \vect(\rho_0\mt)$ where $A=\e_{z\sim D}[K(\ul(z))]$ is a contraction matrix of dimension $m^2\times m^2$. Since the distribution $Y$ is fixed, the corresponding matrix $B=\e_{z\sim Y}[K(\ul(z))]$ is also fixed. Now for any $D\in\mathcal{X}$, we know
		\[\left|\vect(M_0)\ct (A^T-B^T) \vect(\rho_0\mt)\right|\geq 1/3,\]
		and thus by \cref{lemma:error}
		\[\|A-B\|_\mathrm{F}\geq\|A-B\|\geq \frac{1}{T}\|A^T-B^T\|\geq \frac{1}{3T}\sqrt{\frac{2}{m}}.\]
		which means there must exist $i,j\in [m^2]$ such that $|A_{i,j}-B_{i,j}|\geq \frac{2}{9}m^{-5}T^{-2}$.
		
		The classical simulation algorithm iterates over all $i,j\in [m^2]$. For each choice of $i,j$, the algorithm approximates $\e_z[K(\ul(z))_{i,j}]$, compares it to $B_{i,j}$, and claims the samples are drawn from a distribution in $\mathcal{X}$ if there exists $i,j$ such that
		\[\left|\e_z[K(\ul(z))_{i,j}]-B_{i,j}\right|\geq \frac{1}{9m^5T^2}.\]
		
		Since each entry of $K(\ul(z))$ can be computed in time $O(T)$ and space $O(S)$ and has magnitude at most $1$, \cref{lemma:ch} asserts that $\poly(mT)$ samples are enough for accuracy $(9m^5T^2)^{-1}$ with probability $2/3$, in which case the algorithm correctly distinguish $\mathcal{X}$ and $\{Y\}$.
	\end{proof}
	
	\section*{Acknowledgement}
	We would like to thank Dieter van Melkebeek and Subhayan Roy Moulik for very helpful suggestions and comments on a previous version of this work. We also thank the anonymous reviewers for their thorough feedback.
	
	\bibliography{main}
	\bibliographystyle{alpha}
	
	\appendix 

\section{Proof for \cref{thm:perm}}\label{app:perm}

	Given the permutation $\sigma\in S_m$, we define a mapping $\tau:\{1,\ldots,m\}\rightarrow\{1,\ldots,m\}$ as follows:
	\[\tau(i)=\textrm{First }\sigma^k(i)\textrm{ in } \{\sigma(i),\sigma(\sigma(i)),\ldots,\sigma^k(i),\ldots\}\textrm{ such that }\sigma^k(i)\geq i.\]
	We prove the following claim which decomposes the permutation into transpositions:
	\begin{claim}
	 $\sigma=(1\  \tau(1))(2\  \tau(2))\cdots(m\  \tau(m))$.
	\end{claim}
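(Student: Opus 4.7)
The plan is to prove the claim by induction on $m$, using a ``peel off'' reduction that strips away the leftmost transposition at each step. The base case $m=1$ is immediate: the only permutation is the identity, $\tau(1)=1$, and $(1\ 1)$ is the identity. For the inductive step, I would set $\sigma' := (1\ \tau(1)) \circ \sigma$ (with composition right-to-left, i.e. $\sigma'(x) = (1\ \tau(1))(\sigma(x))$), and prove two things: (a) $\sigma'$ fixes $1$, and (b) when $\sigma'$ is viewed as a permutation on $\{2,\ldots,m\}$, the recipe produces exactly the values $\tau(2),\ldots,\tau(m)$ originally computed from $\sigma$.

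For (a), observe that every element of the $\sigma$-orbit of $1$ is trivially $\geq 1$, so the defining search for $\tau(1)$ terminates immediately and $\tau(1) = \sigma(1)$. Therefore $\sigma'(1) = (1\ \sigma(1))(\sigma(1)) = 1$, and $\sigma'$ can be viewed as a permutation on $\{2,\ldots,m\}$, to which the inductive hypothesis applies after an order-preserving relabeling (or, equivalently, by strengthening the statement to permutations on arbitrary finite totally ordered sets).

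For (b), note that the cycle decomposition of $\sigma'$ coincides with that of $\sigma$ except that the element $1$ is spliced out of its cycle and made a fixed point; all other cycles are unchanged. For any $i\geq 2$ lying in a different $\sigma$-cycle from $1$, the orbits are identical and $\tau_{\sigma'}(i) = \tau_\sigma(i)$ trivially. For $i\geq 2$ in the same $\sigma$-cycle as $1$, the $\sigma'$-orbit of $i$ equals the $\sigma$-orbit of $i$ with $1$ deleted and the cyclic order preserved; since $1 < i$, the deleted element is never a candidate for the first orbit entry $\geq i$ encountered starting from $\sigma(i)$, so the answer is unchanged. A small edge case is $\sigma(i) = 1$, where the starting point itself is deleted and the $\sigma'$-search begins at $\sigma(1) = \sigma^2(i)$ instead of $\sigma(i)$; but again the first orbit entry $\geq i$ seen is the same.

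Combining (a) and (b), the inductive hypothesis gives $\sigma'|_{\{2,\ldots,m\}} = (2\ \tau(2))(3\ \tau(3))\cdots(m\ \tau(m))$, and since $\sigma'$ fixes $1$ this equality holds on the full domain. Multiplying on the left by the involution $(1\ \tau(1))$ then yields $\sigma = (1\ \tau(1)) \sigma' = (1\ \tau(1))(2\ \tau(2))\cdots(m\ \tau(m))$. The main obstacle is the bookkeeping in (b): one has to carefully track cyclic positions to verify that removing $1$ from its cycle does not perturb any $\tau$-value for a larger index, in particular in the edge case where $i$ and $1$ are adjacent in the cycle. The rest of the argument is routine.
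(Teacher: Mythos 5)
Your proof is correct, but it takes a genuinely different route from the paper's. The paper argues directly, without induction: writing $\sigma_i=(1\ \tau(1))(2\ \tau(2))\cdots(i\ \tau(i))$, it shows $\sigma_i(i)=\sigma(i)$ for each $i$ by cases on whether $\sigma(i)=i$, $\sigma(i)>i$ or $\sigma(i)<i$; the last case is the delicate one and is handled by constructing a decreasing chain $j_0\geq i>j_1>\cdots>j_r=\sigma(i)$ inside the orbit of $i$ (via an argmax over partial orbits), proving $\tau(j_{t+1})=j_t$, and checking that no transposition with index strictly between consecutive $j_t$'s interferes, so the image of $i$ gets relayed down the chain to $\sigma(i)$. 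You instead induct on the size of the ground set: since $\tau(1)=\sigma(1)$, left-multiplying by $(1\ \tau(1))$ splices $1$ out of its cycle, and because $1$ is smaller than every remaining index, deleting it from an orbit cannot change any $\tau$-value of an index $i\geq 2$ (your edge case $\sigma(i)=1$ being the only point needing care); the inductive hypothesis on the order-isomorphic set $\{2,\ldots,m\}$ then finishes, and multiplying back by the involution $(1\ \tau(1))$ recovers $\sigma$. Your splicing lemma concentrates all the combinatorial work of the paper's hard case into one clean structural fact and avoids the chain bookkeeping entirely, at the mild cost of stating the claim for arbitrary finite totally ordered sets (or relabeling) so the induction applies; the paper's direct argument needs no such generalization and verifies the product index by index. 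Both arguments correctly use the same composition convention (rightmost transposition applied first), which your definition $\sigma'=(1\ \tau(1))\circ\sigma$ matches.
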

	\begin{proof}
		Let $\sigma_i=(1\  \tau(1))(2\  \tau(2))\cdots(i\  \tau(i))$. The goal is to prove $\sigma_m=\sigma$, and it suffices to prove that $\sigma_i(i)=\sigma(i)$ for every $i$, as the transposition $(j\ \tau(j))$ is irrelevant to $i$ whenever $j>i$. We discuss the following three cases:
		
		\begin{itemize}
			
		\item If $\sigma(i)=i$, then $i$ only appears in the transposition $(i\ \tau(i))$, so $\sigma_i(i)=i$.
		
		\item If $\sigma(i)>i$, then $\tau(i)=\sigma(i)$. In this case, for every $j<i$, we have $\tau(j)\neq\tau(i)$, since otherwise if $\tau(j)=\sigma^k(j)$ then $i=\sigma^{k-1}(j)>j$ which contradicts the definition of $\tau$. Therefore $\tau(i)$ is not affected by $\sigma_{i-1}$, and thus
		\[\sigma_i(i)=\sigma_{i-1}\circ\tau(i)=\tau(i)=\sigma(i).\]
		
		\item If $\sigma(i)<i$, we define a list $k_0>k_1>\cdots>k_r=1$ as follows.
		\begin{itemize}
			\item[-] Let $k_0$ be the $k\in[m]$ such that $\tau(i)=\sigma^k(i)$.
			\item[-] If $k_t$ is already defined, let 
			\[k_{t+1}=\arg\max_{0<k<k_t}\sigma^k(i).\]
			\item[-] Repeat until the list reaches $k_r=1$.
		\end{itemize}
	 	For the sake of simplicity in notations, let $j_t=\sigma^{k_t}(i)$, so by definition we have
	 	\[\tau(i)=j_0\geq i>j_1>\cdots>j_r=\sigma(i).\]
	 	Since $j_{t+1}$ is the largest element in $\{\sigma^k(i)\mid k_{t+1}\leq k<k_t\}=\{\sigma^k(j_{t+1})\mid 0\leq k<k_t-k_{t+1}\}$, we also know that $\tau(j_{t+1})=\sigma^{k_t-k_{t+1}}(j_{t+1})=j_t$.
	 	
	 	Finally, for every $j$ with $j_1<j<i$, we have $\tau(j)\neq j_0$, since otherwise if $\tau(j)=\sigma^k(j)=j_0$ for some $k\in[m]$, then $j=\sigma^{k_0-k}(i)$, so it must hold $k>k_0$ which means $i=\sigma^{k-k_0}(j)>j$ which contradicts to the definition of $\tau$. For similar reasons, for every $t=1,2,\ldots,r$ and $j$ with $j_{t+1}<j<j_t$, we have $\tau(j)\neq j_t$ (let $j_{r+1}=0$). Combining the above we have
	 	\begin{align*}
	 		& \sigma_i(i) = \sigma_{i-1}\circ\tau(i) = \sigma_{i-1}(j_0) \\
	 		=\ & \big[\sigma_{j_1}\circ (j_1+1\ \ \tau(j_1+1))\cdots(i-1\ \  \tau(i-1))\big](j_0) \\
	 		=\ & \sigma_{j_1}(j_0) =\sigma_{j_1}\circ\tau(j_1) = \sigma_{j_1-1}(j_1)\\
	 		=\ & \big[\sigma_{j_2}\circ (j_2+1\ \ \tau(j_2+1))\cdots(j_1-1\ \  \tau(j_1-1))\big](j_1) \\
	 		=\ & \sigma_{j_2}(j_1) = \cdots = \sigma_{j_r-1}(j_r)\\
	 		=\ & \big[(1\ \ \tau(1))(2\ \ \tau(2))\cdots(j_r-1\ \ \tau(j_r-1))\big](j_r) = j_r=\sigma(i).\qedhere
	 	\end{align*}
		\end{itemize}	
	\end{proof}
	Now notice that each $\tau(i)$ can be computed in space $O(\log m)$. Each transposition is a two-level unitary which can be implemented up to error $\varepsilon/m$ within time $\poly(m/\varepsilon)$ and space $O(\log(m/\varepsilon))$, via \cite[Claim 4.2]{ta2013inverting}. By \cref{lemma:error} the composition approximates $\sigma$ with error at most $\varepsilon$.

\end{document}